\providecommand{\U}[1]{\protect\rule{.1in}{.1in}}
\newtheorem{theorem}{Theorem}
\newtheorem{corollary}[theorem]{Corollary}
\newtheorem{definition}[theorem]{Definition}
\newtheorem{lemma}[theorem]{Lemma}
\newtheorem{proposition}[theorem]{Proposition}
\begin{document}

\title{Generalizing and Derandomizing Gurvits's Approximation Algorithm for the Permanent}
\author{Scott Aaronson\thanks{MIT. \ Email: aaronson@csail.mit.edu. \ \ This material
is based upon work supported by the National Science Foundation under Grant
No. 0844626. \ Also supported by a DARPA YFA grant, an NSF STC grant, a TIBCO
Chair, a Sloan Fellowship, and an Alan T.\ Waterman Award.}
\and Travis Hance\thanks{MIT. \ Email: tjhance7@gmail.com}}
\date{}
\maketitle

\begin{abstract}
Around 2002, Leonid Gurvits gave a striking randomized algorithm to
approximate the permanent of an $n\times n$\ matrix $A$. \ The algorithm runs
in $O\left(  n^{2}/\varepsilon^{2}\right)  $\ time, and approximates
$\operatorname*{Per}\left(  A\right)  $\ to within $\pm\varepsilon\left\Vert
A\right\Vert ^{n}$ additive error. \ A major advantage of Gurvits's algorithm
is that it works for \textit{arbitrary} matrices, not just for nonnegative
matrices. \ This makes it highly relevant to \textit{quantum optics}, where
the permanents of bounded-norm complex matrices play a central role. \ Indeed,
the existence of Gurvits's algorithm is why, in their recent work on the
\textit{hardness} of quantum optics, Aaronson and Arkhipov (AA) had to talk
about sampling problems rather than estimation problems.

In this paper, we improve Gurvits's algorithm in two ways. \ First, using an
idea from quantum optics, we generalize the algorithm so that it yields a
better approximation when the matrix $A$ has either repeated rows or repeated
columns. \ Translating back to quantum optics, this lets us classically
estimate the probability of \textit{any} outcome of an AA-type
experiment---even an outcome involving multiple photons \textquotedblleft
bunched\textquotedblright\ in the same mode---at least as well as that
probability can be estimated by the experiment itself. \ (This does not, of
course, let us solve the AA sampling problem.) \ It also yields a general
upper bound on the probabilities of \textquotedblleft
bunched\textquotedblright\ outcomes, which resolves a conjecture of Gurvits
and might be of independent physical interest.

Second, we use $\varepsilon$-biased sets to derandomize Gurvits's algorithm,
in the special case where the matrix $A$ is nonnegative. \ More interestingly,
we generalize the notion of $\varepsilon$-biased sets to the complex numbers,
construct \textquotedblleft complex $\varepsilon$-biased
sets,\textquotedblright\ then use those sets to derandomize even our
generalization of Gurvits's algorithm to the multirow/multicolumn case (again
for nonnegative $A$). \ Whether Gurvits's algorithm can be derandomized for
general $A$ remains an outstanding problem.

\end{abstract}

\section{Introduction\label{INTRO}}

The \textit{permanent} of an $n\times n$\ matrix has played a major role in
combinatorics and theoretical computer science for decades. \ Today this
function also plays an increasing role in quantum computing, because of the
fact (pointed out by Caianiello \cite{caianiello} and Troyansky and Tishby
\cite{troyanskytishby}\ among others) that the transition amplitudes for $n$
identical, non-interacting bosons are given by $n\times n$\ permanents.

Recall that the permanent of an $n\times n$ matrix $A=\left(  a_{ij}\right)
$, with entries over any field, is defined as follows:%
\begin{equation}
\operatorname*{Per}\left(  A\right)  :=\sum_{\sigma\in S_{n}}\prod_{i=1}%
^{n}a_{i,\sigma\left(  i\right)  }, \label{basic-permanent-def-eqn}%
\end{equation}
where $S_{n}$ denotes the set of permutations of $\left\{  1,\dotsc,n\right\}
$. \ A great deal is known about the complexity of computing
$\operatorname*{Per}\left(  A\right)  $, given $A$ as input. \ Most famously,
Valiant \cite{valiant} showed in 1979 that computing $\operatorname*{Per}%
\left(  A\right)  $ \textit{exactly} is $\mathsf{\#P}$-complete, even if $A$
is a matrix of nonnegative integers (or indeed, if all of its entries are
either $0$ or $1$). \ On the other hand, in a celebrated application of the
Markov-Chain Monte-Carlo method, Jerrum, Sinclair, and Vigoda \cite{jsv} gave
a randomized algorithm to \textit{approximate} $\operatorname*{Per}\left(
A\right)  $\ to multiplicative error $\varepsilon$, in $\operatorname*{poly}%
\left(  n,1/\varepsilon\right)  $\ time, in the special case where $A$ is
nonnegative. \ When $A$ is an \textit{arbitrary} matrix (which could include
negative or complex entries), it is easy to show that even approximating
$\operatorname*{Per}\left(  A\right)  $\ to polynomial multiplicative error
remains a $\mathsf{\#P}$-hard problem; see Aaronson \cite{aar:per}\ for example.

We can, on the other hand, hope to estimate $\operatorname*{Per}\left(
A\right)  $\ to within nontrivial \textit{additive} error for arbitrary
$A\in\mathbb{C}^{n\times n}$. \ And that is exactly what a simple randomized
algorithm due to Gurvits \cite{gurvits:alg} does: it approximates
$\operatorname*{Per}\left(  A\right)  $\ to within additive error
$\pm\varepsilon\left\Vert A\right\Vert ^{n}$\ in $O\left(  n^{2}%
/\varepsilon^{2}\right)  $\ time. \ Here $\left\Vert A\right\Vert
$\ represents the largest\textit{ singular value} of $A$.\ \ The appearance of
this linear-algebraic quantity when discussing the permanent becomes a little
less surprising once one knows the inequality%
\begin{equation}
\left\vert \operatorname*{Per}\left(  A\right)  \right\vert \leq\left\Vert
A\right\Vert ^{n}. \label{peran}%
\end{equation}
See Section \ref{GURVSEC} for Gurvits's algorithm, the proof of its
correctness, and the closely-related proof of the inequality (\ref{peran}).

In particular, if $U$ is a \textit{unitary} matrix, or a submatrix of a
unitary matrix, then $\left\Vert U\right\Vert \leq1$, so Gurvits's algorithm
approximates $\operatorname*{Per}\left(  U\right)  $\ to within $\pm
\varepsilon$\ additive error in $O\left(  n^{2}/\varepsilon^{2}\right)
$\ time. \ Equivalently, the algorithm lets us detect when
$\operatorname*{Per}\left(  U\right)  $\ is \textquotedblleft anomalously
close to $1$\textquotedblright\ (note that, if $U$ is a \textit{random}
unitary matrix, then $\operatorname*{Per}\left(  U\right)  $\ will usually be
exponentially small,\ and $0$ will accordingly be a good additive estimate to
$\operatorname*{Per}\left(  U\right)  $). \ This observation makes Gurvits's
algorithm highly relevant to the field of quantum optics, where subunitary
matrices play a central role.\ 

\subsection{Our Results\label{RESULTS}}

In this paper, motivated by the problem of simulating quantum optics on a
classical computer, we study how far Gurvits's algorithm can be improved. \ We
present two sets of results on this question.

First, we generalize Gurvits's algorithm, as well as the inequality
(\ref{peran}), to perform better on matrices with repeated rows or repeated
columns. \ Let $B\in\mathbb{C}^{n\times k}$ be an $n\times k$ complex matrix,
and let $s_{1},\dotsc,s_{k}$ be positive integers summing to $n$. \ Also, let
$A\in\mathbb{C}^{n\times n}$ be an $n\times n$ matrix consisting of $s_{i}$
copies of the $i^{th}$\ column of $B$, for all $i\in\left[  k\right]  $.
\ Then we give a randomized algorithm that takes $O\left(  nk/\varepsilon
^{2}\right)  $\ time, and that estimates $\operatorname*{Per}\left(  A\right)
$ to within an additive error%
\begin{equation}
\pm\varepsilon\cdot\frac{s_{1}!\dotsm s_{k}!}{\sqrt{s_{1}^{s_{1}}\dotsm
s_{k}^{s_{k}}}}\left\Vert B\right\Vert ^{n}. \label{pest0}%
\end{equation}
Interestingly, we obtain this generalization by interpreting certain formal
variables $x_{1},\dotsc,x_{k}$\ arising in Gurvits's algorithm as what a
physicist would call \textquotedblleft bosonic creation
operators,\textquotedblright\ and then doing what would be natural for
creation operators (e.g., replacing each $x_{i}$\ by $\sqrt{s_{i}}x_{i}$).

We also prove the general inequality%
\begin{equation}
\left\vert \operatorname*{Per}\left(  A\right)  \right\vert \leq\frac
{s_{1}!\dotsm s_{k}!}{\sqrt{s_{1}^{s_{1}}\dotsm s_{k}^{s_{k}}}}\left\Vert
B\right\Vert ^{n}. \label{pub0}%
\end{equation}
Notice that both of these results reduce to Gurvits's in the special case
$k=n$\ and $s_{1}=\dotsm=s_{n}=1$, but in general they can be much better (for
example, if we have information about $\left\Vert B\right\Vert $\ but not
$\left\Vert A\right\Vert $). \ We discuss the quantum-optics motivation for
these improvements in Section \ref{IMP}.

Second, we show that both Gurvits's algorithm and our generalization of it can
be \textit{derandomized}, in the special case where all matrix entries are
nonnegative. \ That is, for nonnegative $n\times n$\ matrices $A\in
\mathbb{R}_{\geq0}^{n\times n}$, we show that we can approximate
$\operatorname*{Per}\left(  A\right)  $ to within additive error
$\pm\varepsilon\cdot\Vert A\Vert^{n}$, deterministically and in
$\operatorname*{poly}\left(  n,1/\varepsilon\right)  $ time. \ If
$A\in\mathbb{R}_{\geq0}^{n\times n}$ is obtained from an $n\times k$\ matrix
$B\in\mathbb{R}_{\geq0}^{n\times k}$ as described above, then we also show
that we can approximate $\operatorname*{Per}\left(  A\right)  $\ to within
additive error%
\begin{equation}
\pm\varepsilon\cdot\frac{s_{1}!\dotsm s_{k}!}{\sqrt{s_{1}^{s_{1}}\dotsm
s_{k}^{s_{k}}}}\Vert B\Vert^{n},
\end{equation}
deterministically and in $\operatorname*{poly}\left(  n,1/\varepsilon\right)
$ time.

To derandomize Gurvits's original algorithm, the idea is simply to use
$\varepsilon$-biased sets. \ To derandomize our generalization of Gurvits's
algorithm is more interesting: there, we need to generalize the notion of
$\varepsilon$-biased sets to complex roots of unity, then explicitly construct
the \textquotedblleft complex $\varepsilon$-biased sets\textquotedblright%
\ that we need.

Let us compare our algorithm to a previous deterministic approximation
algorithm for the permanent. \ Gurvits and Samorodnitsky \cite{gurvitsky} gave
a deterministic algorithm for estimating the \textit{mixed discriminant} of a
positive semidefinite matrix up to a multiplicative error of $e^{n}$. \ The
permanent of a nonnegative matrix is a special case of this. \ Our algorithm
improves over Gurvits and Samorodnitsky's \cite{gurvitsky} when
$\operatorname*{Per}\left(  A\right)  $ is large, i.e., close to $\left\Vert
A\right\Vert ^{n}$ or to $\frac{s_{1}!\dotsm s_{k}!}{\sqrt{s_{1}^{s_{1}}\dotsm
s_{k}^{s_{k}}}}\Vert B\Vert^{n}$.

Of course, for nonnegative $A$, the algorithm of Jerrum, Sinclair, Vigoda
\cite{jsv} (JSV) gives a better approximation to $\operatorname*{Per}\left(
A\right)  $\ than any of the algorithms discussed above (including ours).
\ Crucially, though, no one seems to have any idea how to derandomize JSV.
\ One motivation for trying to derandomize Gurvits's algorithm is the hope
that it \textit{might} be a stepping-stone toward a derandomization of the
much more complicated JSV algorithm.

\subsection{Quantum Optics\label{OPTICS}}

This paper is about classical algorithms for a classical permanent-estimation
problem. \ However, the motivation and even one of the algorithmic ideas came
from quantum optics, so a brief discussion of the latter might be helpful.
\ For details of quantum optics from a computer-science perspective, see for
example Aaronson \cite{aar:per} or Aaronson and Arkhipov \cite{aark}.

In \textit{quantum (linear) optics}, one considers states involving identical,
non-interacting photons (or other bosonic particles), which can move from one
location (or \textquotedblleft mode\textquotedblright) to another, but are
never created or destroyed. \ In more detail, the photons are in a
superposition of basis states, each of the form $\left\vert s_{1},\dotsc
,s_{k}\right\rangle $, like so:%
\begin{equation}
\left\vert \Phi\right\rangle =\sum_{s_{1},\dotsc,s_{k}\geq0~:~s_{1}%
+\dotsb+s_{k}=n}\alpha_{s_{1},\dotsc,s_{k}}\left\vert s_{1},\dotsc
,s_{k}\right\rangle ,
\end{equation}
for some complex numbers $\alpha_{s_{1},\dotsc,s_{k}}$\ satisfying%
\begin{equation}
\sum_{s_{1},\dotsc,s_{k}\geq0~:~s_{1}+\dotsb+s_{k}=n}\left\vert \alpha
_{s_{1},\dotsc,s_{k}}\right\vert ^{2}=1.
\end{equation}
Here $s_{i}$\ is a nonnegative integer representing the number of photons in
the $i^{th}$\ mode, and $s_{1}+\dotsb+s_{k}=n$\ is the total number of
photons. \ To modify the state $\left\vert \Phi\right\rangle $, one applies a
network of \textquotedblleft beamsplitters\textquotedblright\ and other
optical elements, which induces some $k\times k$ unitary transformation $U$
acting on the $k$\ modes. \ Via homomorphism, the $k\times k$\ unitary $U$
gives rise to a $\binom{k+n-1}{n}\times\binom{k+n-1}{n}$\ unitary
$\varphi\left(  U\right)  $\ acting on the $n$-photon state $\left\vert
\Phi\right\rangle $. \ Crucially, each entry of the \textquotedblleft
large\textquotedblright\ unitary $\varphi\left(  U\right)  $ is defined in
terms of the \textit{permanent} of a matrix formed from entries of the
\textquotedblleft small\textquotedblright\ unitary $U$. \ The formula is as
follows:%
\begin{equation}
\left\langle s_{1},\dotsc,s_{k}|\varphi\left(  U\right)  |t_{1},\dotsc
,t_{k}\right\rangle =\frac{\operatorname*{Per}\left(  U_{s_{1},\dotsc
,s_{k},t_{1},\dotsc,t_{k}}\right)  }{\sqrt{s_{1}!\cdots s_{k}!t_{1}!\cdots
t_{k}!}}. \label{lo}%
\end{equation}
Here $U_{s_{1},\dotsc,s_{k},t_{1},\dotsc,t_{k}}$\ is the $n\times n$\ matrix
formed from $U$ by taking $s_{i}$\ copies of the $i^{th}$\ row of $U$ for all
$i\in\left[  k\right]  $, and $t_{j}$\ copies of the $j^{th}$\ column of $U$
for all $j\in\left[  k\right]  $. \ It can be checked \cite{aark,aar:per}---it
is not obvious!---that $\varphi$\ is a homomorphism, and that $\varphi\left(
U\right)  $\ is unitary for all $U$. \ (Indeed, the \textquotedblleft
reason\textquotedblright\ for the scaling term $\sqrt{s_{1}!\cdots s_{k}%
!t_{1}!\cdots t_{k}!}$\ is to ensure these properties.)

Thus, in quantum optics, \textit{calculating amplitudes reduces to calculating
permanents of the above form}. \ By the standard rules of quantum mechanics,
the probabilities of measurement outcomes can then be obtained by taking the
absolute squares of the amplitudes.

Intuitively, the reason why the permanent arises here is simply that (by
assumption) the $n$\ photons are identical, and therefore we need to sum over
all $n!$ possible permutations of photons, each of which contributes a term to
the final amplitude.\ \ Indeed, even in a classical situation, with $n$
indistinguishable balls each thrown independently into one of $k$ bins, the
probability of a particular outcome (for example: $2$ balls landing in the
first bin, $0$ balls landing in the second bin, etc.) could be expressed in
terms of the permanent of a matrix $A$ of transition probabilities. \ The
difference is that, in the classical case, this $A$\ would be a
\textit{nonnegative} matrix. \ And therefore, $\operatorname*{Per}\left(
A\right)  $\ could be estimated in randomized polynomial time using the JSV
algorithm \cite{jsv}. \ In the quantum case, by contrast, we want to estimate
$\left\vert \operatorname*{Per}\left(  A\right)  \right\vert ^{2}$\ for a
matrix $A$ of \textit{complex }numbers---and here is it known that
multiplicative estimation is already a $\mathsf{\#P}$-hard problem
\cite{aark}. \ That is why we need to settle for \textit{additive}
approximation---or equivalently, for approximation in the special case that
$\left\vert \operatorname*{Per}\left(  A\right)  \right\vert ^{2}$\ happens to
be \textquotedblleft anomalously large\textquotedblright\ (i.e.,
non-negligible compared to the general upper bound that we prove on
$\left\vert \operatorname*{Per}\left(  A\right)  \right\vert ^{2}$).

\subsection{Implications of Our Results for Quantum Optics\label{IMP}}

With that brief tour of quantum optics out of the way, we can now explain the
implications for quantum optics of our first set of results (the ones
generalizing Gurvits's algorithm to the multirow/multicolumn case).

Consider the \textquotedblleft standard initial state\textquotedblright%
\ $\left\vert 1,\ldots,1,0,\ldots,0\right\rangle $, which consists of $n$
identical photons, each in a separate mode, and the remaining $m-n$\ modes
unoccupied. \ Suppose we pass the $n$ photons through a network of
beamsplitters, then measure the numbers of photons in each of $k$ output
modes. \ Given nonnegative integers $s_{1},\dotsc,s_{k}$ summing to $n$, let
$p$ be the probability that exactly $s_{1}$\ photons are found in the
first\ output mode, exactly $s_{2}$ are found in the second output mode, and
so on up to $s_{k}$. \ Then by (\ref{lo}), we have%
\begin{equation}
p=\frac{\left\vert \operatorname*{Per}\left(  A\right)  \right\vert ^{2}%
}{s_{1}!\cdots s_{k}!},
\end{equation}
where $A$ is the $n\times n$\ matrix formed by taking $s_{i}$ copies of the
$i^{th}$ row of the mode-mixing unitary $U$, for all $i\in\left[  k\right]  $.
\ From this, together with the inequality (\ref{pub0}), we have the upper
bound%
\begin{equation}
p\leq\frac{s_{1}!\dotsm s_{k}!}{s_{1}^{s_{1}}\dotsm s_{k}^{s_{k}}},
\label{pub}%
\end{equation}
and for the associated amplitude $\alpha$,%
\begin{equation}
\left\vert \alpha\right\vert \leq\sqrt{\frac{s_{1}!\dotsm s_{k}!}{s_{1}%
^{s_{1}}\dotsm s_{k}^{s_{k}}}}.
\end{equation}

As a sample application of (\ref{pub}), suppose $s_{1}=n$\ and $s_{2}%
=\cdots=s_{k}=0$. \ Then%
\begin{equation}
p\leq\frac{n!}{n^{n}}\approx\frac{1}{e^{n}}.
\end{equation}
This says that, regardless of what unitary transformation $U$ we apply, we can
never get $n$ identical photons, initially in $n$ separate modes, to
\textquotedblleft congregate\textquotedblright\ into a single one of those
modes with more than $\sim n/e^{n}$ probability (the factor of $n$ arising
from taking the union bound over the modes). \ Or in other words, for $n\geq
3$, there is no direct counterpart to the \textit{Hong-Ou-Mandel} dip
\cite{hom}, the famous effect that causes $n=2$ photons initially in separate
modes to congregate into the same mode with probability $1$.\footnote{When
$n=2$, we get $p\leq1/2$, which corresponds exactly to the Hong-Ou-Mandel dip:
the $2$ photons have probability $1/2$ of both being found in the first mode,
and probability $1/2$ of both being found in the second mode.}

Let us remark that the bound (\ref{pub}) is tight. \ To saturate it, let the
$n\times n$\ unitary $U$\ be block-diagonal with $k$ blocks, of sizes
$s_{1}\times s_{1},\ldots,s_{k}\times s_{k}$. \ Also, let the $i^{th}$\ block
consist of the $s_{i}$-dimensional Quantum Fourier Transform,\ or any other
$s_{i}\times s_{i}$\ unitary matrix whose top row equals $\left(
1/\sqrt{s_{i}},\ldots,1/\sqrt{s_{i}}\right)  $. \ Then one can calculate that
the probability of observing $s_{1}$\ photons in the first mode of the first
block, $s_{2}$\ photons in the first mode of the second block, and so on is
precisely%
\begin{equation}
\left\vert \left\langle 1,\ldots,1|\varphi\left(  U\right)  |s_{1}%
,0,\dotsc,0,s_{2},0,\dotsc,0,\dotsc\right\rangle \right\vert ^{2}=\frac
{s_{1}!\dotsm s_{k}!}{s_{1}^{s_{1}}\dotsm s_{k}^{s_{k}}}.
\end{equation}

Here is another implication of our results for quantum optics. \ Given a
description of the unitary transformation $U$ applied by the beamsplitter
network, the bound (\ref{pest0}) implies the existence of a randomized
algorithm, taking $O\left(  nk/\varepsilon^{2}\right)  $\ time, that estimates
the amplitude $\alpha=\frac{\operatorname*{Per}\left(  A\right)  }{\sqrt
{s_{1}!\dotsm s_{k}!}}$\ to within an additive error of%
\begin{align}
\varepsilon\cdot\sqrt{\frac{s_{1}!\dotsm s_{k}!}{s_{1}^{s_{1}}\dotsm
s_{k}^{s_{k}}}}\left\Vert B\right\Vert ^{n}  &  \leq\varepsilon\cdot
\sqrt{\frac{s_{1}!\dotsm s_{k}!}{s_{1}^{s_{1}}\dotsm s_{k}^{s_{k}}}}\\
&  \leq\varepsilon,
\end{align}
and likewise estimates the probability $p=\left\vert \alpha\right\vert ^{2}$
to within an additive error of%
\begin{align}
\varepsilon\cdot\left\vert \alpha\right\vert \sqrt{\frac{s_{1}!\dotsm s_{k}%
!}{s_{1}^{s_{1}}\dotsm s_{k}^{s_{k}}}}\left\Vert B\right\Vert ^{n}  &
\leq\varepsilon\cdot\frac{s_{1}!\dotsm s_{k}!}{s_{1}^{s_{1}}\dotsm
s_{k}^{s_{k}}}\left\Vert B\right\Vert ^{n}\label{pest}\\
&  \leq\varepsilon\cdot\frac{s_{1}!\dotsm s_{k}!}{s_{1}^{s_{1}}\dotsm
s_{k}^{s_{k}}}\\
&  \leq\varepsilon.
\end{align}
Observe that the above guarantees match Gurvits's in the special case that all
$s_{i}$'s are equal to $1$, but they become \textit{better} than Gurvits's
when\ $\sum_{i}\left\vert s_{i}-1\right\vert $ is large. \ In the latter case,
(\ref{pub}) says that the probability $p$ is exponentially small, but
(\ref{pest}) says that we can \textit{nevertheless} get a decent estimate of
$p$.

\section{Gurvits's Randomized Algorithm for Permanent
Estimation\label{GURVSEC}}

The starting point for Gurvits's algorithm is the following well-known formula
for the permanent of an $n\times n$ matrix, called \textit{Ryser's formula}
\cite{ryser}:
\begin{equation}
\operatorname*{Per}(A)=\sum_{x=\left(  x_{1},\dotsc,x_{n}\right)
\in\{0,1\}^{n}}(-1)^{x_{1}+\dotsb+x_{n}}\prod_{i=1}^{n}\left(  a_{i,1}%
x_{1}+\dotsb+a_{i,n}x_{n}\right)  .
\end{equation}
Ryser's formula leads to an $O\left(  2^{n}n^{2}\right)  $ algorithm for
computing the permanent exactly, which can be improved to $O\left(
2^{n}n\right)  $ by cleverly iterating through the $x$'s in Gray code order.
\ This remains the fastest-known exact algorithm for the
permanent.\footnote{Recently Bj\"{o}rklund \cite{bjorklund} discovered a
slightly faster algorithm, taking $2^{n-\Omega\left(  \sqrt{n/\log n}\right)
}$ time, for computing the permanent of a matrix of $\operatorname*{poly}%
\left(  n\right)  $-bit integers.}

To present Gurvits's randomized algorithm, we will use a similar formula due
to Glynn \cite{glynn}, which pulls from the domain $\left\{  -1,1\right\}
^{n}$ rather than $\left\{  0,1\right\}  ^{n}$. \ We state Glynn's formula in
terms of the expectation of a random variable. \ For a given $x\in\left\{
-1,1\right\}  ^{n}$, define the \textit{Glynn estimator} of an $n\times n$
matrix $A$ as
\begin{equation}
\operatorname*{Gly}\nolimits_{x}(A):=x_{1}\dotsm x_{n}\prod_{i=1}^{n}\left(
a_{i,1}x_{1}+\dotsb+a_{i,n}x_{n}\right)  .
\end{equation}
Then we have%
\begin{equation}
\operatorname*{Per}\left(  A\right)  =\operatorname*{E}_{x\in\{-1,1\}^{n}%
}\left[  {\operatorname*{Gly}\nolimits_{x}(A)}\right]  .
\end{equation}
To see why, we just need to expand the product:
\begin{equation}
\operatorname*{E}_{x\in\{-1,1\}^{n}}\left[  {\operatorname*{Gly}\nolimits_{x}%
}\left(  {A}\right)  \right]  =\sum_{\sigma_{1},\dotsc,\sigma_{n}\in\lbrack
n]}a_{1,\sigma_{1}}\dotsm a_{n,\sigma_{n}}\operatorname*{E}_{x\in\{-1,1\}^{n}%
}\left[  {(x_{1}\dotsm x_{n})(x_{\sigma_{1}}\dotsm x_{\sigma_{n}})}\right]  .
\label{expected-gly-expansion}%
\end{equation}
Then, note that $\operatorname*{E}_{x\in\{-1,1\}^{n}}[(x_{1}\dotsm
x_{n})(x_{\sigma_{1}}\dotsm x_{\sigma_{n}})]$ is $1$ if the map $i\mapsto
\sigma_{i}$ is a permutation, and $0$ otherwise. \ Hence the above sum is
simply $\operatorname*{Per}(A)$.

As a special case of a more general algorithm for \textit{mixed discriminants}%
, Gurvits \cite{gurvits:alg} gave a polynomial-time randomized sampling
algorithm for the permanent. \ His algorithm is simply the following: for some
$m=O\left(  1/\varepsilon^{2}\right)  $, first choose $n$-bit strings\ $x_{1}%
,\ldots,x_{m}\in\left\{  -1,1\right\}  ^{n}$ uniformly and independently at
random. \ Then compute ${\operatorname*{Gly}\nolimits_{x_{j}}}\left(
{A}\right)  $\ for all $j\in\left[  m\right]  $, and output%
\begin{equation}
\frac{{\operatorname*{Gly}\nolimits_{x_{1}}}\left(  {A}\right)  +\cdots
+{\operatorname*{Gly}\nolimits_{x_{m}}}\left(  {A}\right)  }{m}%
\end{equation}
as the estimate for $\operatorname*{Per}\left(  A\right)  $.

Since each ${\operatorname*{Gly}\nolimits_{x_{i}}}\left(  {A}\right)  $\ can
be computed in $O\left(  n^{2}\right)  $ time, the algorithm clearly takes
$O\left(  n^{2}/\varepsilon^{2}\right)  $\ time. \ The algorithm's correctness
follows from a single (important) fact. \ Recall that $\left\Vert A\right\Vert
$\ denotes the largest singular value of $A$, or equivalently,%
\begin{equation}
\left\Vert A\right\Vert :=\sup_{x\neq0}\frac{\left\Vert Ax\right\Vert
}{\left\Vert x\right\Vert }%
\end{equation}
where $\left\Vert x\right\Vert $\ denotes the $2$-norm of the vector $x$.

\begin{proposition}
\label{gly-bound-prop}For any $n\times n$ complex matrix $A$, we have
$\left\vert \operatorname*{Gly}_{x}\left(  A\right)  \right\vert
\leq\left\Vert A\right\Vert ^{n}$.
\end{proposition}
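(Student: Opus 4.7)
The plan is to observe that $|x_1 \cdots x_n| = 1$ since $x \in \{-1,1\}^n$, so that
\[
\left|\operatorname{Gly}_x(A)\right| = \prod_{i=1}^n \left|a_{i,1}x_1 + \cdots + a_{i,n}x_n\right| = \prod_{i=1}^n |(Ax)_i|,
\]
where $(Ax)_i$ denotes the $i$-th entry of the vector $Ax$. This reduces the claim to a clean geometric inequality relating a product of coordinates to a norm.

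Next I would apply the AM-GM inequality to the nonnegative reals $|(Ax)_i|^2$:
\[
\prod_{i=1}^n |(Ax)_i|^2 \;\leq\; \left(\frac{1}{n}\sum_{i=1}^n |(Ax)_i|^2\right)^{\!n} \;=\; \left(\frac{\|Ax\|^2}{n}\right)^{\!n}.
\]
Taking square roots yields $\prod_{i=1}^n |(Ax)_i| \leq (\|Ax\|/\sqrt{n})^n$.

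Finally I would invoke the definition of the operator norm together with $\|x\| = \sqrt{n}$ (since each coordinate of $x$ is $\pm 1$) to get $\|Ax\| \leq \|A\|\cdot\|x\| = \|A\|\sqrt{n}$. Substituting back gives $\prod_{i=1}^n |(Ax)_i| \leq \|A\|^n$, which is exactly the claim. There is no real obstacle here — the only nonobvious step is recognizing that AM-GM is the right tool to bridge a product of coordinates and the $2$-norm, and that the $\sqrt{n}$ factor coming from $\|x\|$ exactly cancels the $1/\sqrt{n}$ produced by AM-GM, which is ultimately why the bound $\|A\|^n$ comes out sharp (with no dimension-dependent constant).
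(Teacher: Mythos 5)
Your proof is correct and follows essentially the same route as the paper: both drop the $|x_1\cdots x_n|=1$ factor, apply AM--GM to bound the product $\prod_i |(Ax)_i|$ by $(\|Ax\|/\sqrt{n})^n$, and then use the operator-norm bound $\|Ax\|\le\|A\|\,\|x\|$ with $\|x\|=\sqrt{n}$. The paper compresses these steps into a single chain of inequalities, but the underlying argument is identical.
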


\begin{proof}
For any $x\in\{1,-1\}^{n}$, we have by the arithmetic-geometric mean
inequality,
\begin{align}
\left\vert \operatorname*{Gly}\nolimits_{x}(A)\right\vert  &  \leq\left(
\sqrt{\frac{\sum_{i=1}^{n}\left\vert a_{i,1}x_{1}+\dotsb+a_{i,n}%
x_{n}\right\vert ^{2}}{n}}\right)  ^{n}\\
&  =\left(  \frac{\left\Vert Ax\right\Vert }{\left\Vert x\right\Vert }\right)
^{n}\\
&  \leq\left\Vert A\right\Vert ^{n}.
\end{align}
\end{proof}

To summarize, $\operatorname*{Gly}_{x}\left(  A\right)  $\ is a random
variable that is bounded by $\left\Vert A\right\Vert ^{n}$ in absolute value,
and whose expectation is $\operatorname*{Per}\left(  A\right)  $. \ This
implies that%
\begin{equation}
\left\vert \operatorname*{Per}\left(  A\right)  \right\vert \leq\left\Vert
A\right\Vert ^{n}.
\end{equation}
By a standard Chernoff bound, it also implies that we can estimate
$\operatorname*{Per}\left(  A\right)  $\ to additive error $\pm\varepsilon
\left\Vert A\right\Vert ^{n}$, with high probability, by taking the empirical
mean of $\operatorname*{Gly}_{x}\left(  A\right)  $\ for $O\left(
1/\varepsilon^{2}\right)  $\ random $x$'s.

\section{Generalized Gurvits Algorithm\label{GENGLYNN}}

We now give our generalization of Gurvits's algorithm to the
multirow/multicolumn case, as well as our generalized upper bound for the
permanent. \ As in Section \ref{RESULTS}, given nonnegative integers
$s_{1},\ldots,s_{k}$\ summing to $n$, we let $B\in\mathbb{C}^{n\times k}$\ be
an $n\times k$\ matrix, and let $A\in\mathbb{C}^{n\times n}$\ be the $n\times
n$\ matrix\ in which the $i^{th}$\ column of $B$ is repeated $s_{i}$\ times.
\ Also, let $\mathcal{R}[j]$ denote the set of the $j^{th}$ roots of unity,
and let $\mathcal{X}:=\mathcal{R}[s_{1}+1]\times\dotsb\times\mathcal{R}%
[s_{k}+1]$. \ Then for any $x\in\mathcal{X}$, we define the
\textit{generalized Glynn estimator} as follows. \ If $x=(x_{1},\dotsc,x_{k}%
)$, then let $y_{i}=\sqrt{s_{i}}x_{i}$ and
\begin{equation}
\operatorname*{GenGly}\nolimits_{x}(A):=\frac{s_{1}!\cdots s_{k}!}%
{s_{1}^{s_{1}}\dotsm s_{k}^{s_{k}}}\overline{y_{1}}^{s_{1}}\dotsm
\overline{y_{k}}^{s_{k}}\prod_{i=1}^{n}(y_{1}b_{i,1}+\dotsb+y_{k}b_{i,k})
\end{equation}
where $b_{i,j}$ denotes the $(i,j)$ entry of $B$. \ We will use this to
estimate $\operatorname*{Per}\left(  A\right)  $, as follows. \ First we will
show that%
\begin{equation}
\operatorname*{E}_{x}\left[  \operatorname*{GenGly}\nolimits_{x}\left(
A\right)  \right]  =\operatorname*{Per}\left(  A\right)  .
\end{equation}
Then we will show that%
\begin{equation}
\left\vert \operatorname*{GenGly}\nolimits_{x}\left(  A\right)  \right\vert
\leq\frac{s_{1}!\cdots s_{k}!}{\sqrt{s_{1}^{s_{1}}\dotsm s_{k}^{s_{k}}}%
}\left\Vert B\right\Vert ^{n}.
\end{equation}

The $\frac{s_{1}!\cdots s_{k}!}{s_{1}^{s_{1}}\dotsm s_{k}^{s_{k}}}$ factor in
the definition of $\operatorname*{GenGly}_{x}$ is a normalization factor that
we need for Lemma \ref{gengly-expected-lemma} below to hold. \ The value of
the normalization factor depends on our choice of the scale factors when
defining the $y_{i}$'s. \ Here, we set $y_{i}=\sqrt{s_{i}}x_{i}$ in order to
optimize the bound in Lemma \ref{gengly-bound-lemma} below.

\begin{lemma}
\label{gengly-expected-lemma}For the uniform distribution of $x$ over
$\mathcal{X}$, the expected value of $\operatorname*{GenGly}_{x}\left(
A\right)  $ is
\begin{equation}
\operatorname*{E}_{x\in\mathcal{X}}\left[  {\operatorname*{GenGly}%
\nolimits_{x}}\left(  {A}\right)  \right]  =\operatorname*{Per}\left(
A\right)  .
\end{equation}

\end{lemma}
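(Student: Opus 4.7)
My plan is to prove the lemma by direct expansion and a character-sum argument, exploiting the choice of $(s_i+1)$-th roots of unity to enforce a combinatorial constraint.

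First I would expand the product inside the generalized Glynn estimator as a sum over functions $\sigma : [n] \to [k]$:
\begin{equation}
\prod_{i=1}^n \bigl(y_1 b_{i,1} + \dotsb + y_k b_{i,k}\bigr) = \sum_{\sigma : [n] \to [k]} \Bigl(\prod_{i=1}^n b_{i,\sigma(i)}\Bigr) \, y_1^{t_1(\sigma)} \dotsm y_k^{t_k(\sigma)},
\end{equation}
where $t_j(\sigma) := |\sigma^{-1}(j)|$. Substituting into $\operatorname{GenGly}_x(A)$ and using $y_i = \sqrt{s_i}\, x_i$ (so $\overline{y_i}^{s_i} y_i^{t_i} = s_i^{(s_i + t_i)/2}\, \overline{x_i}^{s_i} x_i^{t_i}$, and since $x_i$ lies on the unit circle, $\overline{x_i} = x_i^{-1}$), the $\sigma$-th term contributes an expectation $\prod_i s_i^{(s_i + t_i(\sigma))/2}\, \mathbb{E}[x_i^{t_i(\sigma) - s_i}]$.

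Next I would use the standard identity $\mathbb{E}_{x_i \in \mathcal{R}[s_i+1]}[x_i^m] = 1$ if $(s_i+1) \mid m$ and $0$ otherwise. So the $\sigma$-th term survives only if $t_i(\sigma) \equiv s_i \pmod{s_i + 1}$ for every $i$. The key combinatorial step is to show that this congruence system, together with the constraints $t_i(\sigma) \ge 0$ and $\sum_i t_i(\sigma) = n = \sum_i s_i$, forces $t_i(\sigma) = s_i$ for every $i$: any $t_i > s_i$ must be at least $s_i + (s_i+1) = 2s_i + 1 > s_i$, and any $t_i < s_i$ satisfying the congruence would have to be negative, so equality of the two sums forces exact equality termwise.

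With only the $\sigma$ satisfying $t_i(\sigma) = s_i$ surviving, the expectation becomes
\begin{equation}
\operatorname{E}_{x \in \mathcal{X}}[\operatorname{GenGly}_x(A)] = \frac{s_1! \dotsm s_k!}{s_1^{s_1} \dotsm s_k^{s_k}} \cdot (s_1^{s_1} \dotsm s_k^{s_k}) \sum_{\substack{\sigma : [n] \to [k] \\ |\sigma^{-1}(j)| = s_j}} \prod_{i=1}^n b_{i,\sigma(i)},
\end{equation}
so the scale factor collapses to $s_1! \dotsm s_k!$. Finally I would identify this with $\operatorname{Per}(A)$: grouping the $n!$ permutations in Ryser/permanent-definition sum by the induced function $\sigma \in [k]^{[n]}$ (telling us which column of $B$ each position lands in), each admissible $\sigma$ is hit by exactly $s_1! \dotsm s_k!$ permutations of the copied columns of $A$, and each such permutation contributes the same monomial $\prod_i b_{i,\sigma(i)}$. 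This yields $\operatorname{Per}(A) = s_1! \dotsm s_k! \sum_\sigma \prod_i b_{i,\sigma(i)}$, matching the expression above.

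The main obstacle I anticipate is being careful with the modular-arithmetic step to ensure $t_i(\sigma) = s_i$ is forced exactly; this is exactly where the nonstandard choice of $(s_i+1)$-th roots of unity (rather than $s_i$-th, or $\{-1,1\}$ as in the original Glynn estimator) does the work. Everything else is bookkeeping: verifying the $\sqrt{s_i}$ scaling cancels the normalization factor, and the standard permanent-counting identification.
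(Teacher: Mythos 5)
Your proof is correct and follows essentially the same route as the paper: expand the product over functions $\sigma:[n]\to[k]$, invoke orthogonality of roots of unity to kill all but the ``balanced'' terms, and then count the $s_1!\dotsm s_k!$ permutations of $[n]$ inducing each balanced $\sigma$. The one place you go a bit further than the paper is the modular-arithmetic step: the paper simply asserts that the expectation $\operatorname{E}[(\overline{y}_1^{s_1}\dotsm\overline{y}_k^{s_k})(y_{\sigma_1}\dotsm y_{\sigma_n})]$ vanishes unless the product evaluates to $|y_1|^{2s_1}\dotsm|y_k|^{2s_k}$, whereas your argument explicitly rules out the ``wraparound'' terms where $t_i(\sigma)-s_i$ is a nonzero multiple of $s_i+1$ (and hence the $x_i$-dependence disappears without $t_i=s_i$) by noting that $t_i\geq 0$ together with $t_i\equiv s_i\pmod{s_i+1}$ forces $t_i\geq s_i$, so $\sum t_i=\sum s_i$ forces exact equality. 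This is exactly the (implicit) reason the paper's assertion is correct in context, so your write-up is a slightly more careful version of the same proof.
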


\begin{proof}
We have
\begin{align}
&  \operatorname*{E}_{x\in\mathcal{X}}\left[  {\frac{s_{1}!\cdots s_{k}%
!}{s_{1}^{s_{1}}\dotsm s_{k}^{s_{k}}}\overline{y_{1}}^{s_{1}}\dotsm
\overline{y_{k}}^{s_{k}}\prod_{i=1}^{n}(y_{1}b_{i,1}+\dotsb+y_{k}b_{i,k}%
)}\right] \\
&  =\frac{s_{1}!\cdots s_{k}!}{s_{1}^{s_{1}}\dotsm s_{k}^{s_{k}}}\sum
_{\sigma_{1},\dotsc,\sigma_{n}\in\lbrack k]}b_{1,\sigma_{1}}\dotsm
b_{n,\sigma_{n}}\operatorname*{E}_{x\in\mathcal{X}}\left[  {(\overline{y_{1}%
}^{s_{1}}\dotsm\overline{y_{k}}^{s_{k}})(y_{\sigma_{1}}\dotsm y_{\sigma_{n}}%
)}\right]  .
\end{align}
Now notice that by symmetry over the roots of unity,%
\begin{equation}
\operatorname*{E}_{x\in\mathcal{X}}\left[  {(\overline{y}_{1}^{s_{1}}%
\cdots\overline{y}_{k}^{s_{k}})(y_{t_{1}}\cdots y_{t_{n}})}\right]  =0,
\end{equation}
unless the product inside happens to evaluate to $\left\vert y_{1}\right\vert
^{2s_{1}}\cdots\left\vert y_{k}\right\vert ^{2s_{k}}$, in which case we have
\begin{equation}
\operatorname*{E}_{x\in\mathcal{X}}\left[  {\left\vert y_{1}\right\vert
^{2s_{1}}\cdots\left\vert y_{k}\right\vert ^{2s_{k}}}\right]  =s_{1}^{s_{1}%
}\cdots s_{k}^{s_{k}}\operatorname*{E}_{x\in\mathcal{X}}\left[  {\left\vert
x_{1}\right\vert ^{2s_{1}}\cdots\left\vert x_{k}\right\vert ^{2s_{k}}}\right]
=s_{1}^{s_{1}}\cdots s_{k}^{s_{k}}.
\end{equation}
Therefore,
\begin{align}
\operatorname*{E}_{x\in\mathcal{X}}\left[  {\operatorname*{GenGly}%
\nolimits_{x}\left(  A\right)  }\right]   &  =s_{1}!\cdots s_{k}!\sum
_{t_{1},\ldots,t_{n}\in\left[  k\right]  ~:~\left\vert \left\{  i:t_{i}%
=j\right\}  \right\vert =s_{j}~\forall j\in\left[  k\right]  }\prod_{i=1}%
^{n}b_{i,t_{i}}\\
&  =\operatorname*{Per}\left(  A\right)  \label{divide-by-factorials-eqn}%
\end{align}
where (\ref{divide-by-factorials-eqn}) follows from the fact that each product
$\prod_{i=1}^{n}b_{i,t_{i}}$ appears $s_{1}!\dotsm s_{k}!$ times in equation
(\ref{basic-permanent-def-eqn}).
\end{proof}

\begin{lemma}
\label{gengly-bound-lemma}For all $x\in\mathcal{X}$,
\begin{equation}
\left\vert \operatorname*{GenGly}\nolimits_{x}\left(  A\right)  \right\vert
\leq\frac{s_{1}!\cdots s_{k}!}{\sqrt{s_{1}^{s_{1}}\cdots s_{k}^{s_{k}}}%
}\left\Vert B\right\Vert ^{n}.
\end{equation}

\end{lemma}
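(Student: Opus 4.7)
The plan is to mimic the proof of Proposition~\ref{gly-bound-prop} almost verbatim, but keeping careful track of the scale factors hidden in $y_i = \sqrt{s_i}\,x_i$. The point is that even though $x \in \mathcal{X}$ consists of roots of unity, the vector $y = (y_1,\ldots,y_k) \in \mathbb{C}^k$ has a controlled norm, and that is the only property of $y$ we actually need.

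First I would extract the easy prefactor: since $|x_i| = 1$, we have $|y_i| = \sqrt{s_i}$, and therefore
\begin{equation}
\bigl|\overline{y_1}^{s_1} \cdots \overline{y_k}^{s_k}\bigr| \;=\; s_1^{s_1/2} \cdots s_k^{s_k/2} \;=\; \sqrt{s_1^{s_1}\cdots s_k^{s_k}}.
\end{equation}
Next I would bound the product $\prod_{i=1}^{n}(y_1 b_{i,1} + \cdots + y_k b_{i,k})$. Writing this as $\prod_{i=1}^n (By)_i$, the arithmetic--geometric mean inequality (exactly as in Proposition~\ref{gly-bound-prop}) gives
\begin{equation}
\left|\prod_{i=1}^{n}(By)_i\right| \;\leq\; \left(\sqrt{\frac{\sum_{i=1}^{n}|(By)_i|^2}{n}}\right)^{n} \;=\; \left(\frac{\|By\|}{\sqrt{n}}\right)^{n}.
\end{equation}

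The key observation that makes the scaling $y_i = \sqrt{s_i}x_i$ optimal is that $\|y\|^2 = \sum_{i=1}^{k} s_i = n$, so $\|y\| = \sqrt{n}$. Then $\|By\| \leq \|B\|\cdot\|y\| = \|B\|\sqrt{n}$, and the AM-GM bound above simplifies to $\|B\|^n$. Combining everything,
\begin{equation}
\bigl|\operatorname{GenGly}\nolimits_x(A)\bigr| \;\leq\; \frac{s_1!\cdots s_k!}{s_1^{s_1}\cdots s_k^{s_k}} \cdot \sqrt{s_1^{s_1}\cdots s_k^{s_k}} \cdot \|B\|^n \;=\; \frac{s_1!\cdots s_k!}{\sqrt{s_1^{s_1}\cdots s_k^{s_k}}}\,\|B\|^n,
\end{equation}
as desired.

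There is no real obstacle here beyond bookkeeping. The only thing worth flagging is \emph{why} one scales $y_i = \sqrt{s_i}x_i$ in the first place: any rescaling $y_i = c_i x_i$ would give a valid bound of the form $(\text{prefactor})\cdot(\|c\|/\sqrt{n})^n \|B\|^n$, and minimizing the combined prefactor subject to the constraint coming from the expectation calculation in Lemma~\ref{gengly-expected-lemma} selects $|c_i|^2 = s_i$. The remark preceding the lemma already advertises this, so the computation above is essentially forced once that choice is made.
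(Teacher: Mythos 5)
Your proposal is correct and matches the paper's proof essentially verbatim: both compute $\|y\|_2 = \sqrt{n}$, apply AM--GM to $\prod_i |(By)_i|$ to get $\|B\|^n$, and absorb the factor $|\overline{y}_1^{s_1}\cdots\overline{y}_k^{s_k}| = \sqrt{s_1^{s_1}\cdots s_k^{s_k}}$ into the prefactor. The concluding remark about why $y_i = \sqrt{s_i}\,x_i$ is the optimal scaling is a nice addition not present in the paper's proof itself, though the paper does advertise this choice in the paragraph preceding the lemma.
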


\begin{proof}
Let $y=\left(  y_{1},\ldots,y_{k}\right)  $. \ Then
\begin{align}
\left\Vert y\right\Vert _{2}  &  =\sqrt{\left\vert y_{1}\right\vert
^{2}+\cdots+\left\vert y_{k}\right\vert ^{2}}\\
&  =\sqrt{s_{1}+\cdots+s_{k}}\\
&  =\sqrt{n}.
\end{align}
Hence
\begin{equation}
\left\Vert By\right\Vert \leq\left\Vert B\right\Vert \left\Vert y\right\Vert
=\sqrt{n}\left\Vert B\right\Vert .
\end{equation}
So letting $\left(  By\right)  _{i}$ be the $i^{th}$\ entry of $By$, we have%
\begin{equation}
\left\vert \prod_{i=1}^{n}\left(  By\right)  _{i}\right\vert \leq\left(
\sqrt{\frac{n\left\Vert B\right\Vert ^{2}}{n}}\right)  ^{n}=\left\Vert
B\right\Vert ^{n}%
\end{equation}
by the arithmetic-geometric mean inequality. \ Therefore
\begin{align}
\left\vert \operatorname*{GenGly}\nolimits_{x}\left(  A\right)  \right\vert
&  =\left\vert \frac{s_{1}!\cdots s_{k}!}{s_{1}^{s_{1}}\dotsm s_{k}^{s_{k}}%
}\overline{y}_{1}^{s_{1}}\cdots\overline{y}_{k}^{s_{k}}\prod_{i=1}^{n}\left(
By\right)  _{i}\right\vert \\
&  \leq\frac{s_{1}!\cdots s_{k}!}{s_{1}^{s_{1}}\dotsm s_{k}^{s_{k}}}%
\sqrt{s_{1}^{s_{1}}\cdots s_{k}^{s_{k}}}\left\Vert B\right\Vert ^{n}\\
&  =\frac{s_{1}!\cdots s_{k}!}{\sqrt{s_{1}^{s_{1}}\dotsm s_{k}^{s_{k}}}%
}\left\Vert B\right\Vert ^{n}.
\end{align}
\end{proof}

To summarize, ${\operatorname*{GenGly}\nolimits_{x}}\left(  {A}\right)  $\ is
an unbiased estimator for $\operatorname*{Per}\left(  A\right)  $\ that is
upper-bounded by $\frac{s_{1}!\cdots s_{k}!}{\sqrt{s_{1}^{s_{1}}\dotsm
s_{k}^{s_{k}}}}\left\Vert B\right\Vert ^{n}$\ in absolute value. \ This
immediately implies that%
\begin{equation}
\left\vert \operatorname*{Per}\left(  A\right)  \right\vert \leq\frac
{s_{1}!\cdots s_{k}!}{\sqrt{s_{1}^{s_{1}}\dotsm s_{k}^{s_{k}}}}\left\Vert
B\right\Vert ^{n}.
\end{equation}
It also implies that there exists a randomized algorithm, taking $O\left(
nk/\varepsilon^{2}\right)  $ time, that approximates $\operatorname*{Per}%
\left(  A\right)  $\ to within additive error%
\begin{equation}
\pm\varepsilon\cdot\frac{s_{1}!\cdots s_{k}!}{\sqrt{s_{1}^{s_{1}}\dotsm
s_{k}^{s_{k}}}}\left\Vert B\right\Vert ^{n}%
\end{equation}
with high probability. \ Just like in Section \ref{GURVSEC}, that algorithm is
simply to choose $m=O\left(  1/\varepsilon^{2}\right)  $\ independent random
samples\ $x_{1},\ldots,x_{m}\in\mathcal{X}$, then output%
\begin{equation}
\frac{{\operatorname*{GenGly}\nolimits_{x_{1}}}\left(  {A}\right)
+\cdots+{\operatorname*{GenGly}\nolimits_{x_{m}}}\left(  {A}\right)  }{m}%
\end{equation}
as the estimate for $\operatorname*{Per}\left(  A\right)  $. \ The correctness
of this algorithm follows from a standard Chernoff bound.

As discussed in Section \ref{IMP}, in a linear-optics experiment with the
standard initial state, the above lets us estimate any output amplitude to
within additive error $\pm\varepsilon\sqrt{\frac{s_{1}!\dotsm s_{k}!}%
{s_{1}^{s_{1}}\dotsm s_{k}^{s_{k}}}}$, and any output probability to within
additive error $\pm\varepsilon\frac{s_{1}!\dotsm s_{k}!}{s_{1}^{s_{1}}\dotsm
s_{k}^{s_{k}}}$. \ This is at least as good an approximation as is provided by
running the experiment itself (and better, if some of the $s_{i}$'s are large).

\section{Derandomization\label{DERAND}}

We now discuss the derandomization of Gurvits's algorithm (and its
generalization) for nonnegative matrices.

\subsection{Derandomizing Gurvits\label{DEGURV}}

We start by showing how Gurvits's algorithm can be derandomized in the case of
computing the permanent of a $n\times n$ matrix with nonnegative real entries.
\ That is, we give a deterministic algorithm that estimates
$\operatorname*{Per}\left(  A\right)  $ to within $\pm\varepsilon\left\Vert
A\right\Vert ^{n}$ additive error with certainty.

To do this, we need a pseudorandom generator with a specific guarantee.

\begin{definition}
We say that a probability distribution $\mathcal{D}$ on $\{0,1\}^{n}$ is
$\varepsilon$\textbf{-biased} if for any nonzero vector $a\in\{0,1\}^{n}$, we
have
\begin{equation}
\left\vert \operatorname*{E}_{x\sim\mathcal{D}}\left[  {(-1)^{a\cdot x}%
}\right]  \right\vert \leq\varepsilon.
\end{equation}
Furthermore, we say that a probabilistic algorithm $G$ is $\varepsilon$-biased
if it outputs an $\varepsilon$-biased distribution.
\end{definition}

Note that if $G$ outputs the uniform distribution over $\{0,1\}^{n}$, then $G$
is $0$-biased. \ Unfortunately, this requires $n$ random bits, and so it takes
exponential time to loop over all possible states. \ We will need an
$\varepsilon$-biased generator with a much smaller seed. \ Such a generator
was constructed by Naor and Naor in \cite{naor-naor}, who showed the following.

\begin{theorem}
\label{eps-bias-thm}There exists an $\varepsilon$-biased generator which runs
in $\operatorname*{poly}(n,1/\varepsilon)$ time and uses%
\begin{equation}
O\left(  \log n+\log1/\varepsilon\right)
\end{equation}
bits of randomness.
\end{theorem}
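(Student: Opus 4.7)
The plan is to give the Alon--Goldreich--H\aa stad--Peralta-style construction (a cleaner alternative to the original Naor--Naor construction, but essentially the same idea), and analyze it via a polynomial argument over a finite field. Fix $m=\lceil\log_{2}(n/\varepsilon)\rceil$, so that $2^{m}\geq n/\varepsilon$, and identify $\mathbb{F}_{2^{m}}$ with $\mathbb{F}_{2}^{m}$ in the usual way. The generator $G$ will use a seed consisting of two independent uniformly random elements $\alpha,\beta\in\mathbb{F}_{2^{m}}$, a total of $2m=O(\log n+\log 1/\varepsilon)$ random bits. Given the seed, define the output $x=x(\alpha,\beta)\in\{0,1\}^{n}$ by
\begin{equation}
x_{i}:=\langle\alpha^{i},\beta\rangle\qquad(i=1,\dotsc,n),
\end{equation}
where $\langle\cdot,\cdot\rangle$ is the inner product in $\mathbb{F}_{2}^{m}$. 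Each $\alpha^{i}$ can be computed in polynomial time by repeated squaring and multiplication in $\mathbb{F}_{2^{m}}$, so $G$ clearly runs in $\operatorname*{poly}(n,1/\varepsilon)$ time.

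Next I would verify the $\varepsilon$-bias property. Fix any nonzero $a\in\{0,1\}^{n}$, and compute
\begin{equation}
\sum_{i=1}^{n}a_{i}x_{i}=\sum_{i=1}^{n}a_{i}\langle\alpha^{i},\beta\rangle=\langle P(\alpha),\beta\rangle,\quad\text{where}\quad P(\alpha):=\sum_{i=1}^{n}a_{i}\alpha^{i}.
\end{equation}
Thus
\begin{equation}
\operatorname*{E}_{\alpha,\beta}\!\left[(-1)^{a\cdot x}\right]=\operatorname*{E}_{\alpha}\!\left[\operatorname*{E}_{\beta}\!\left[(-1)^{\langle P(\alpha),\beta\rangle}\right]\right].
\end{equation}
For any fixed $\alpha$, the inner expectation over uniform $\beta\in\mathbb{F}_{2}^{m}$ is $1$ if $P(\alpha)=0$ and $0$ otherwise; hence the bias equals $\Pr_{\alpha}[P(\alpha)=0]$.

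The key observation is that, because $a$ is nonzero in $\{0,1\}^{n}\subset\mathbb{F}_{2}^{n}$, the polynomial $P\in\mathbb{F}_{2^{m}}[t]$ is a nonzero polynomial of degree at most $n$, so it has at most $n$ roots in $\mathbb{F}_{2^{m}}$. Therefore
\begin{equation}
\Pr_{\alpha\in\mathbb{F}_{2^{m}}}[P(\alpha)=0]\leq\frac{n}{2^{m}}\leq\varepsilon
\end{equation}
by our choice of $m$, which establishes the $\varepsilon$-bias condition. The main obstacle here is conceptual rather than technical: one has to see that the two sources of randomness play qualitatively different roles ($\beta$ turns a nontrivial linear test into a fair coin, while $\alpha$ handles the remaining trivial case via the Schwartz--Zippel-type bound on zeros of $P$). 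Once that decomposition is in hand, the parameter count $2m=O(\log n+\log 1/\varepsilon)$ and the polynomial running time are immediate, completing the proof.
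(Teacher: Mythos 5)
Your proof is correct, but note that the paper does not actually prove Theorem~\ref{eps-bias-thm} --- it simply cites it as a result of Naor and Naor \cite{naor-naor}, whose original construction is based on quadratic characters, small-bias via linear feedback shift registers, and deterministic amplification. What you have written out instead is the Alon--Goldreich--H{\aa}stad--Peralta (AGHP) ``powering'' construction, which is a genuinely different and considerably simpler route to the same statement. The decomposition you highlight is exactly the right conceptual point: the $\beta$-randomness converts any nonzero linear test into a fair coin, and the $\alpha$-randomness controls the bad event $P(\alpha)=0$ via the elementary root-count bound for a nonzero univariate polynomial over a field. The seed length $2m=O(\log n+\log 1/\varepsilon)$ and the polynomial running time follow immediately, as you say. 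The one small detail worth making explicit is that to do arithmetic in $\mathbb{F}_{2^{m}}$ you need a degree-$m$ irreducible polynomial over $\mathbb{F}_{2}$; this can be found deterministically in $\operatorname*{poly}(m)$ time (or taken from a known explicit family), so it does not affect the running-time claim. Also, your exponent range $i=1,\dotsc,n$ gives $P$ degree at most $n$ with $P(0)=0$; that is harmless since $P$ is still a nonzero polynomial whenever $a\neq 0$, and the bound $\Pr_{\alpha}[P(\alpha)=0]\leq n/2^{m}$ still holds. In short: correct proof, different (and cleaner) construction than the one the paper references, and perfectly adequate for the way the theorem is used downstream.
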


This allows us to derandomize Gurvits's algorithm.

\begin{theorem}
\label{eps-bias-implies-approx-thm}If $\mathcal{D}$ is an $\varepsilon$-biased
distribution on $\{0,1\}^{n}$, then for any $n\times n$ matrix $A$ with
nonnegative real entries, we have
\begin{equation}
\left\vert \operatorname*{Per}(A)-\operatorname*{E}_{x\sim\mathcal{D}}\left[
{\operatorname*{Gly}\nolimits_{x}(A)}\right]  \right\vert \leq\varepsilon
\cdot\left\Vert A\right\Vert ^{n}%
\end{equation}
where we define the vector $x\in\{-1,1\}$ by $x_{i}=(-1)^{e_{i}}$.
\end{theorem}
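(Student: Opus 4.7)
The plan is to follow the same expansion of $\operatorname{Gly}_x(A)$ that was used in equation~(\ref{expected-gly-expansion}) to derive Glynn's formula, but now to track what happens when $x$ is drawn from an $\varepsilon$-biased distribution instead of the uniform one. Expanding the product defining $\operatorname{Gly}_x(A)$ gives
\begin{equation}
\operatorname*{Gly}\nolimits_{x}(A)=\sum_{\sigma_{1},\dotsc,\sigma_{n}\in\lbrack n]}a_{1,\sigma_{1}}\dotsm a_{n,\sigma_{n}}\,(x_{1}\dotsm x_{n})(x_{\sigma_{1}}\dotsm x_{\sigma_{n}}),
\end{equation}
and writing $x_i=(-1)^{e_i}$, for each term the monomial $(x_1\cdots x_n)(x_{\sigma_1}\cdots x_{\sigma_n})$ equals $(-1)^{a_{\sigma}\cdot e}$ for some $a_\sigma\in\{0,1\}^n$, where $a_\sigma=0$ exactly when $i\mapsto\sigma_i$ is a permutation.

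Next I would split the expectation under $\mathcal{D}$ into the ``permutation'' contribution and the ``non-permutation'' contribution. The permutation terms contribute exactly $\operatorname{Per}(A)$, independently of which distribution we use, since the monomial is identically $1$. For each non-permutation term, the $\varepsilon$-biased property yields $\bigl|\operatorname{E}_{e\sim\mathcal{D}}[(-1)^{a_\sigma\cdot e}]\bigr|\leq\varepsilon$. Therefore
\begin{equation}
\Bigl|\operatorname*{Per}(A)-\operatorname*{E}_{x\sim\mathcal{D}}[\operatorname*{Gly}\nolimits_{x}(A)]\Bigr|\leq\varepsilon\sum_{\sigma\in[n]^n\,:\,a_\sigma\neq 0}a_{1,\sigma_1}\cdots a_{n,\sigma_n}.
\end{equation}

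Here is where the nonnegativity hypothesis does the real work. Because every $a_{i,j}\geq 0$, I can drop the restriction $a_\sigma\neq 0$ to upper-bound the sum by $\sum_{\sigma\in[n]^n}a_{1,\sigma_1}\cdots a_{n,\sigma_n}=\prod_{i=1}^{n}(a_{i,1}+\dotsb+a_{i,n})$. But this last expression is nothing other than $\operatorname{Gly}_{\mathbf{1}}(A)$ evaluated at $x=(1,\ldots,1)$ (the leading factor $x_1\cdots x_n$ is $1$), so Proposition~\ref{gly-bound-prop} gives $\prod_{i=1}^{n}(a_{i,1}+\dotsb+a_{i,n})\leq \|A\|^n$. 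Combining the two bounds yields the desired inequality $|\operatorname{Per}(A)-\operatorname{E}_{x\sim\mathcal{D}}[\operatorname{Gly}_{x}(A)]|\leq\varepsilon\|A\|^n$.

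The only subtle point, and the place where a careless proof would go wrong, is the final bound on $\sum_{\sigma}|a_{1,\sigma_1}\cdots a_{n,\sigma_n}|$: in the general complex case this quantity is the permanent of the entrywise-absolute-value matrix $|A|$, which may be exponentially larger than $\|A\|^n$, and so the argument genuinely requires $A$ to be nonnegative (which lets us identify $\sum|\cdot|$ with $\operatorname{Gly}_{\mathbf{1}}(A)$ and invoke Proposition~\ref{gly-bound-prop}). Everything else is just bookkeeping: splitting the Glynn expansion along the permutation/non-permutation dichotomy and applying the defining property of an $\varepsilon$-biased distribution term-by-term.
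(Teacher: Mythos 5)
Your proof is correct and follows essentially the same route as the paper: expand $\operatorname{Gly}_x(A)$ over $\sigma\in[n]^n$, observe that the monomial is $(-1)^{c\cdot e}$ with $c\neq 0$ exactly when $\sigma$ is not a permutation, apply the $\varepsilon$-bias bound to the non-permutation terms, drop the restriction using nonnegativity, and recognize the resulting sum as $\operatorname{Gly}_{(1,\dotsc,1)}(A)\leq\|A\|^n$ via Proposition~\ref{gly-bound-prop}. Your closing remark explaining precisely where nonnegativity is indispensable matches the paper's (implicit) reasoning and is a nice clarification.
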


\begin{proof}
Following equation (\ref{expected-gly-expansion}) we get
\begin{equation}
\operatorname*{E}_{e\sim\mathcal{D}}\left[  {\operatorname*{Gly}%
\nolimits_{x}(A)}\right]  =\sum_{\sigma_{1},\dotsc,\sigma_{n}\in\lbrack
n]}a_{1,\sigma_{1}}\dotsm a_{n,\sigma_{n}}\operatorname*{E}_{x\sim\mathcal{D}%
}\left[  {(x_{1}\dotsm x_{n})(x_{\sigma_{1}}\dotsm x_{\sigma_{n}})}\right]  .
\end{equation}
If the map $i\mapsto\sigma_{i}$ is a permutation, then $\operatorname*{E}%
_{e\sim\mathcal{D}}\left[  {(x_{1}\dotsm x_{n})(x_{\sigma_{1}}\dotsm
x_{\sigma_{n}})}\right]  $ will always be $1$. \ We would like to bound the
terms for which $\sigma$ does not give a permutation. \ Given such a $\sigma$,
define a vector $c\in\{0,1\}^{n}$ such that
\begin{equation}
c_{i}=1+\left(  \left\vert \left\{  j:\sigma_{j}=i\right\}  \right\vert
\right)  \operatorname*{mod}2.
\end{equation}
Clearly, $c$ is nonzero if $\sigma$ does not give a permutation. \ Then
\begin{align}
\left\vert \operatorname*{E}_{e\sim\mathcal{D}}\left[  {(x_{1}\dotsm
x_{n})(x_{\sigma_{1}}\dotsm x_{\sigma_{n}})}\right]  \right\vert  &
=\left\vert \operatorname*{E}_{e\sim\mathcal{D}}\left[  {(-1)^{c\cdot e}%
}\right]  \right\vert \\
&  \leq\varepsilon
\end{align}
since $\mathcal{D}$ is $\varepsilon$-biased by assumption. \ Then,
\begin{align}
\left\vert \operatorname*{Per}(A)-\operatorname*{E}_{e\sim\mathcal{D}}\left[
{\operatorname*{Gly}\nolimits_{x}(A)}\right]  \right\vert  &  =\left\vert
\sum_{%
\genfrac{}{}{0pt}{}{\sigma_{1},\dotsc,\sigma_{n}\in\lbrack n]}{i\mapsto
\sigma_{i}\text{ not a permutation}}%
}a_{1,\sigma_{1}}\dotsm a_{n,\sigma_{n}}\operatorname*{E}_{e\sim\mathcal{D}%
}\left[  {(x_{1}\dotsm x_{n})(x_{\sigma_{1}}\dotsm x_{\sigma_{n}})}\right]
\right\vert \\
&  \leq\sum_{%
\genfrac{}{}{0pt}{}{\sigma_{1},\dotsc,\sigma_{n}\in\lbrack n]}{i\mapsto
\sigma_{i}\text{ not a permutation}}%
}a_{1,\sigma_{1}}\dotsm a_{n,\sigma_{n}}\left\vert \operatorname*{E}%
_{e\sim\mathcal{D}}\left[  {(x_{1}\dotsm x_{n})(x_{\sigma_{1}}\dotsm
x_{\sigma_{n}})}\right]  \right\vert \label{eqn-use-nonneg}\\
&  \leq\varepsilon\cdot\sum_{%
\genfrac{}{}{0pt}{}{\sigma_{1},\dotsc,\sigma_{n}\in\lbrack n]}{i\mapsto
\sigma_{i}\text{ not a permutation}}%
}a_{1,\sigma_{1}}\dotsm a_{n,\sigma_{n}}\\
&  \leq\varepsilon\cdot\sum_{\sigma_{1},\dotsc,\sigma_{n}\in\lbrack
n]}a_{1,\sigma_{1}}\dotsm a_{n,\sigma_{n}}\\
&  =\varepsilon\cdot\prod_{i=1}^{n}(a_{1,i}+\dotsb+a_{n,i})\\
&  =\varepsilon\cdot\operatorname*{Gly}\nolimits_{(1,\dotsc,1)}(A)\\
&  \leq\varepsilon\cdot\left\Vert A\right\Vert ^{n} \label{eqn-use-gly-bound}%
\end{align}
where for (\ref{eqn-use-nonneg}) we use the fact that the $a_{i,j}$ are
nonnegative, and for (\ref{eqn-use-gly-bound}) we appeal to Proposition
\ref{gly-bound-prop}.
\end{proof}

\begin{corollary}
\label{derandomize-gurvits-corollary}There exists an algorithm to
deterministically approximate the permanent of an $n\times n$ matrix $A$ with
nonnegative entries to within error $\varepsilon\cdot\left\Vert A\right\Vert
^{n}$ which runs in time $\operatorname*{poly}\left(  n,1/\varepsilon\right)
$.
\end{corollary}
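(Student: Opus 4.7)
The plan is to combine Theorem \ref{eps-bias-thm} with Theorem \ref{eps-bias-implies-approx-thm} in the most direct possible way: enumerate all seeds of a small-seed $\varepsilon$-biased generator, evaluate the Glynn estimator on each resulting sign vector, and output the average. All the analytic work has already been done, so what remains is a bookkeeping argument about running time.

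Concretely, I would first invoke Theorem \ref{eps-bias-thm} to obtain an $\varepsilon$-biased generator $G\colon\{0,1\}^r\to\{0,1\}^n$ with seed length $r=O(\log n+\log(1/\varepsilon))$ and per-seed evaluation time $\operatorname{poly}(n,1/\varepsilon)$. Let $\mathcal{D}$ be the distribution on $\{0,1\}^n$ induced by feeding $G$ a uniformly random seed; by construction $\mathcal{D}$ is $\varepsilon$-biased. For each of the $2^r=\operatorname{poly}(n,1/\varepsilon)$ seeds $e\in\{0,1\}^r$, compute $G(e)\in\{0,1\}^n$, convert it to $x\in\{-1,1\}^n$ via $x_i=(-1)^{G(e)_i}$, and then evaluate $\operatorname{Gly}_x(A)$ in $O(n^2)$ arithmetic operations using its definition. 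Output the exact arithmetic mean
\begin{equation}
\widetilde{P}:=\frac{1}{2^r}\sum_{e\in\{0,1\}^r}\operatorname*{Gly}\nolimits_{x(e)}(A),
\end{equation}
which by definition equals $\operatorname{E}_{x\sim\mathcal{D}}[\operatorname{Gly}_x(A)]$.

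Correctness is then immediate from Theorem \ref{eps-bias-implies-approx-thm}, which (using nonnegativity of $A$) gives
\begin{equation}
\bigl|\operatorname*{Per}(A)-\widetilde{P}\bigr|\leq\varepsilon\cdot\|A\|^n.
\end{equation}
For the running time, enumerating seeds costs $2^r=\operatorname{poly}(n,1/\varepsilon)$ iterations, and each iteration runs $G$ in $\operatorname{poly}(n,1/\varepsilon)$ time and evaluates $\operatorname{Gly}_x(A)$ in $O(n^2)$ time, for a total of $\operatorname{poly}(n,1/\varepsilon)$.

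Since every quantitative step—the bias bound, the Glynn-estimator bound, and the seed-length bound—has already been established, there is no real obstacle; the only thing to verify carefully is that applying Theorem \ref{eps-bias-implies-approx-thm} with this particular $\mathcal{D}$ is legitimate, i.e.\ that one can deterministically average over all outputs of $G$ in polynomial time, which follows because the seed length is logarithmic. Note that nonnegativity of $A$ is used essentially in Theorem \ref{eps-bias-implies-approx-thm}; without it one cannot pull the absolute values inside the sum in the analogue of \eqref{eqn-use-nonneg}, which is exactly why the derandomization here is restricted to nonnegative matrices.
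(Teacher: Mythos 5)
Your proposal is correct and follows essentially the same approach as the paper: invoke Theorem \ref{eps-bias-thm} to get a small-seed $\varepsilon$-biased generator, enumerate all $2^{O(\log n + \log 1/\varepsilon)}$ seeds to compute $\operatorname{E}_{x\sim\mathcal{D}}[\operatorname{Gly}_x(A)]$ exactly, and conclude via Theorem \ref{eps-bias-implies-approx-thm}. The extra remarks on running time and on where nonnegativity enters are accurate but not a departure from the paper's argument.
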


\begin{proof}
By Theorem \ref{eps-bias-thm} there is a polynomial time algorithm which
outputs an $\varepsilon$-biased distribution $\mathcal{D}$ on $\{0,1\}^{n}$.
\ Since the seed for this algorithm is only $O\left(  \log n+\log
1/\varepsilon\right)  $ bits, we can compute $\operatorname*{E}_{e\sim
\mathcal{D}}\left[  {\operatorname*{Gly}_{x}(A)}\right]  $ by looping over all
possible seeds. \ By Theorem \ref{eps-bias-implies-approx-thm}, this suffices
as our desired approximation.
\end{proof}

\subsection{Derandomizing the Multicolumn Case\label{DEMULT}}

To derandomize the algorithm of Section \ref{GENGLYNN}, we need to generalize
the concept of $\varepsilon$-bias to arbitrary roots of unity.

\begin{definition}
Given $s_{1},\dotsc,s_{k}$ with $n=s_{1}+\dotsb+s_{k}$, recall that we let
$\mathcal{X}:=\mathcal{R}[s_{1}+1]\times\dotsb\times\mathcal{R}[s_{k}+1]$.
\ Then we say that a probability distribution $\mathcal{D}$ on $\mathcal{X}$
is \textbf{complex-}$\varepsilon$\textbf{-biased} if for any $e_{1}%
,\dotsc,e_{k}$, with $e_{i}\in\{0,\dotsc,s_{i}\}$ where not all $e_{i}$ are
$0$, we have
\begin{equation}
\left\vert \operatorname*{E}_{x\sim\mathcal{D}}\left[  {x_{1}^{e_{i}}\dotsm
x_{k}^{e_{k}}}\right]  \right\vert \leq\varepsilon.
\end{equation}
Also, we say that a probabilistic algorithm $G$ is complex-$\varepsilon
$-biased if it outputs a complex-$\varepsilon$-biased distribution.
\end{definition}

In Section \ref{CEPSBIAS} we will show the following:

\begin{theorem}
\label{complex-eps-bias-thm}There exists a complex-$\varepsilon$-biased
generator which runs in $\operatorname*{poly}\left(  n,1/\varepsilon\right)  $
time and uses $O\left(  \log n+\log1/\varepsilon\right)  $ bits of randomness.
\end{theorem}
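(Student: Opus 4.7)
The plan is to reinterpret the task as constructing an $\varepsilon$-biased distribution, in the standard character/Fourier sense, over the finite abelian group $G := \mathbb{Z}/(s_1+1) \times \cdots \times \mathbb{Z}/(s_k+1)$. Under the identification $\mathcal{R}[s_i+1] \leftrightarrow \mathbb{Z}/(s_i+1)$, the nontrivial characters of $G$ are exactly the tuples $(e_1,\dotsc,e_k)$ with $e_i \in \{0,\dotsc,s_i\}$ not all zero, and the complex-$\varepsilon$-bias condition is precisely the small-Fourier-coefficient condition. Since $|G| = \prod_i(s_i+1) \leq 2^n$, the target seed length $O(\log n + \log 1/\varepsilon)$ matches the $O(\log\log|G| + \log 1/\varepsilon)$ bound that Naor--Naor achieve for the Boolean group $\mathbb{F}_2^n$, so the natural strategy is to generalize their finite-field character-sum construction to general abelian groups.

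The construction I would try first is a Reed--Solomon-style rule over $\mathbb{F}_p$ for a prime $p = \mathrm{poly}(n,1/\varepsilon)$. Sample $\alpha,\beta \in \mathbb{F}_p$ uniformly (using $O(\log p) = O(\log n + \log 1/\varepsilon)$ random bits), set $u_i := (\alpha\cdot i + \beta)\bmod p$, and output $x_i := e^{2\pi i u_i/(s_i+1)}$ for each $i \in [k]$. For any fixed nontrivial character $(e_1,\dotsc,e_k)$, the expected value $\operatorname*{E}_{\alpha,\beta}[\prod_i x_i^{e_i}]$ is a two-dimensional exponential sum; expanding each factor $e^{2\pi i e_i u_i/(s_i+1)}$ as a short Fourier series in the additive characters of $\mathbb{F}_p$ and swapping the sums reduces the whole quantity to a linear combination of sums of the form $\frac{1}{p^2}\sum_{\alpha,\beta\in\mathbb{F}_p}\psi(L(\alpha,\beta))$, where $\psi$ is an additive character of $\mathbb{F}_p$ and $L$ is linear in $\alpha,\beta$; these evaluate to $0$ or $1$ in closed form, and careful bookkeeping gives total bias $O(n/\sqrt{p}) \leq \varepsilon$ once $p = \Theta(n^3/\varepsilon^2)$.

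The main obstacle I expect is the modular mismatch between $p$ and the moduli $s_i+1$: when $(s_i+1) \nmid p$, the exponential $e^{2\pi i u_i/(s_i+1)}$ is not itself a character of $\mathbb{F}_p$, so the ``clean'' reduction above is only approximate and the Fourier series used to linearize it contributes an error term. I would first attempt to circumvent this by choosing $p \equiv 1 \pmod{\operatorname*{lcm}(s_1+1,\dotsc,s_k+1)}$ via Linnik-type density bounds on primes in arithmetic progressions, so that each $(s_i+1)$-th root of unity embeds faithfully into $\mathbb{F}_p^*$ and the character analysis becomes exact. Failing that, I would fall back on the Chinese Remainder Theorem: decompose $G$ into its Sylow-$q$ subgroups over the at most $n+1$ primes $q \leq n+1$, construct a complex-$(\varepsilon/n)$-biased distribution on each $G_q$ (a product of cyclic groups of prime-power order, amenable to an Azar--Motwani--Naor-style construction), and tensor the results together. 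A union bound over the factors still yields bias at most $\varepsilon$, and since each factor uses an $O(\log n + \log 1/\varepsilon)$-bit seed and the primes can be interleaved with a shared randomness extractor, the total seed length remains $O(\log n + \log 1/\varepsilon)$ as required.
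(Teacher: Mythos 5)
Your opening reformulation is right and it's the same one the paper implicitly uses: the task is to build an $\varepsilon$-biased distribution, in the character-sum sense, over the finite abelian group $G = \mathbb{Z}_{s_1+1}\times\cdots\times\mathbb{Z}_{s_k+1}$, and since each $s_i+1 \leq n+1$ one should hope for seed length $O(\log n + \log 1/\varepsilon)$. But the specific constructions you propose do not close the gap you yourself flag, namely the modular mismatch between the prime $p$ and the mixed moduli $s_i+1$.

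The first workaround (pick $p \equiv 1 \pmod{\operatorname*{lcm}(s_1+1,\dotsc,s_k+1)}$) blows the seed budget. The constraint is only $s_1+\cdots+s_k = n$, so one can take $s_i+1$ to be the first $m \approx \sqrt{2n/\ln n}$ primes, whence $\operatorname*{lcm}(s_i+1)$ is about $e^{\sqrt{2n\ln n}}$ and Linnik gives $p \geq \operatorname*{lcm}$, i.e.\ $\log p = \Omega(\sqrt{n\log n})$ bits of randomness just to sample $\alpha,\beta\in\mathbb{F}_p$. The second workaround (CRT into Sylow parts and tensor) stalls for a closely related reason: in that same extremal example there are $\Theta(\sqrt{n/\log n})$ distinct primes $q$ dividing some $s_i+1$, so tensoring independent $O(\log n + \log 1/\varepsilon)$-bit seeds across the factors costs $\Omega(\sqrt{n\log n})$ bits. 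The remark that ``the primes can be interleaved with a shared randomness extractor'' is where the real work would have to happen, and as stated it is not an argument: a single shared seed would need to be simultaneously $\varepsilon$-biased against every one of the $G_q$'s, which is exactly the original problem in slightly different clothing. Tensoring with genuinely independent seeds is the only regime where the union-bound/product-of-biases analysis applies, and that is what you cannot afford.

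The paper sidesteps the modulus issue entirely by \emph{not} seeking a global algebraic (Reed--Solomon / $\mathbb{F}_p$-character) structure. It instead follows the Naor--Naor combinatorial route componentwise: sample each coordinate $f_i$ (nearly) uniformly from $\{0,\dotsc,s_i\}$ using a $c$-wise independent generator, which costs $O(c\log n)$ bits regardless of the moduli because each coordinate domain has size at most $n+1$; overlay a pairwise-independent Bernoulli mask to isolate a small nonempty subset of the ``active'' coordinates of a given nontrivial character; observe that conditioning on one active coordinate being uniform over $\{0,\dotsc,s_j\}$ makes $\xi_j^{f_j}$ ``$\pi/4$-strong'' (a quantitative substitute for ``equals $-1$,'' which is unavailable over higher-order roots of unity) with constant probability; then amplify with expander walks and average powers $\lambda^{d}$, $d\in\{0,1\}$, to drive the bias down to $\varepsilon$. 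The technical innovation you would need to supply, and did not, is precisely this replacement of ``$\prod\xi_i^{f_i}=-1$'' by ``$\prod\xi_i^{f_i}$ is $\theta$-strong'' together with the observation that one needs \emph{many} strong samples (not just one) before averaging, which in turn is what the deterministic amplification theorem delivers.
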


These complex-$\varepsilon$-biased distributions are useful because of the
following property, which is analogous to Theorem
\ref{eps-bias-implies-approx-thm}:

\begin{theorem}
\label{complex-eps-bias-implies-approx-thm}If $\mathcal{D}$ is a
complex-$\varepsilon$-biased distribution on $\mathcal{X}$, then for any
$n\times k$ matrix $B$ with nonnegative real entries, let $A$ be an $n\times
n$ matrix with $s_{i}$ copies of the $i^{th}$ column of $B$. \ Then we have
\begin{equation}
\left\vert \operatorname*{Per}(A)-\operatorname*{E}_{x\sim\mathcal{D}}\left[
{\operatorname*{GenGly}\nolimits_{x}(A)}\right]  \right\vert \leq
\varepsilon\cdot\frac{s_{1}!\dotsm s_{k}!}{\sqrt{s_{1}^{s_{1}}\dotsm
s_{k}^{s_{k}}}}\left\Vert B\right\Vert ^{n}.
\end{equation}

\end{theorem}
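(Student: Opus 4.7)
The plan is to mirror the derandomization proof of Theorem \ref{eps-bias-implies-approx-thm}, replacing the Glynn expansion with the generalized-Glynn expansion from Lemma \ref{gengly-expected-lemma} and invoking complex-$\varepsilon$-bias in place of ordinary $\varepsilon$-bias. First I would expand
\begin{equation*}
\operatorname*{E}_{x\sim\mathcal{D}}\!\left[\operatorname*{GenGly}\nolimits_{x}(A)\right]
=\frac{s_{1}!\cdots s_{k}!}{s_{1}^{s_{1}}\cdots s_{k}^{s_{k}}}\sum_{\sigma:[n]\to[k]}b_{1,\sigma_{1}}\cdots b_{n,\sigma_{n}}\,\operatorname*{E}_{x\sim\mathcal{D}}\!\left[\overline{y_{1}}^{s_{1}}\cdots\overline{y_{k}}^{s_{k}}\,y_{\sigma_{1}}\cdots y_{\sigma_{n}}\right]
\end{equation*}
with $y_{i}=\sqrt{s_{i}}\,x_{i}$, pulling out the deterministic magnitude factor $\sqrt{s_{1}^{s_{1}}\cdots s_{k}^{s_{k}}}\cdot\prod_{j}s_{j}^{t_{j}(\sigma)/2}$, where $t_{j}(\sigma):=|\{i:\sigma_{i}=j\}|$, so that only a pure root-of-unity expectation $\operatorname*{E}_{x\sim\mathcal{D}}[\overline{x_{1}}^{s_{1}}\cdots\overline{x_{k}}^{s_{k}}\,x_{\sigma_{1}}\cdots x_{\sigma_{n}}]$ remains inside.

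Next I would rewrite that inner monomial in the standard complex-$\varepsilon$-bias form $\prod_{j}x_{j}^{d_{j}(\sigma)}$ with $d_{j}(\sigma)\in\{0,\dotsc,s_{j}\}$, using $x_{j}^{s_{j}+1}=1$ (so $\overline{x_{j}}^{s_{j}}=x_{j}$) to obtain $d_{j}(\sigma)\equiv 1+t_{j}(\sigma)\pmod{s_{j}+1}$. A short arithmetic check shows that all $d_{j}(\sigma)$ vanish if and only if $t_{j}(\sigma)=s_{j}$ for every $j$: the congruence $t_{j}\equiv s_{j}\pmod{s_{j}+1}$ together with $t_{j}\geq 0$ forces $t_{j}\geq s_{j}$, and the constraint $\sum_{j}t_{j}=n=\sum_{j}s_{j}$ then forces equality. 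For these ``good'' $\sigma$ the inner expectation is exactly $1$ and the computation in Lemma \ref{gengly-expected-lemma} identifies their total contribution with $\operatorname*{Per}(A)$; for every other $\sigma$ the inner expectation has absolute value at most $\varepsilon$ by complex-$\varepsilon$-bias.

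The remaining error is then bounded, using nonnegativity of the entries of $B$ to extend the sum over ``bad'' $\sigma$ to all $\sigma$ as in (\ref{eqn-use-nonneg}), by
\begin{equation*}
\varepsilon\cdot\frac{s_{1}!\cdots s_{k}!}{\sqrt{s_{1}^{s_{1}}\cdots s_{k}^{s_{k}}}}\sum_{\sigma:[n]\to[k]}b_{1,\sigma_{1}}\cdots b_{n,\sigma_{n}}\prod_{j}s_{j}^{t_{j}(\sigma)/2}=\varepsilon\cdot\frac{s_{1}!\cdots s_{k}!}{\sqrt{s_{1}^{s_{1}}\cdots s_{k}^{s_{k}}}}\prod_{i=1}^{n}\bigl(\sqrt{s_{1}}\,b_{i,1}+\dotsb+\sqrt{s_{k}}\,b_{i,k}\bigr).
\end{equation*}
Recognizing the product as $\prod_{i}(Bv)_{i}$ for $v=(\sqrt{s_{1}},\dotsc,\sqrt{s_{k}})$ with $\|v\|_{2}=\sqrt{n}$, the same AM-GM-plus-singular-value argument used in Lemma \ref{gengly-bound-lemma} yields $\prod_{i}(Bv)_{i}\leq\|B\|^{n}$, completing the bound. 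The main obstacle is really just the exponent reduction that identifies the inner monomial as a standard complex-$\varepsilon$-bias test monomial and pins down precisely when it is trivial; once that correspondence is in hand, the rest is bookkeeping of the normalization constants together with the nonnegativity/AM-GM trick already developed in Sections \ref{GURVSEC} and \ref{GENGLYNN}.
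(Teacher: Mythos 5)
Your proposal is correct and follows essentially the same route as the paper's proof: expand $\operatorname{E}_{x\sim\mathcal{D}}[\operatorname{GenGly}_x(A)]$ over maps $\sigma:[n]\to[k]$, absorb the $\sqrt{s_i}$ scale factors into a deterministic coefficient so the inner expectation is a pure monomial in the $x_j$ with reduced exponents, observe that this monomial is trivial precisely for the ``desired'' $\sigma$ (those with $t_j(\sigma)=s_j$ for all $j$) and is bounded by $\varepsilon$ otherwise via complex-$\varepsilon$-bias, then use nonnegativity to extend the error sum to all $\sigma$ and recognize the result as $\varepsilon\cdot\operatorname{GenGly}_{(1,\dotsc,1)}(A)\le\varepsilon\cdot\frac{s_1!\dotsm s_k!}{\sqrt{s_1^{s_1}\dotsm s_k^{s_k}}}\|B\|^n$. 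The one small thing you add is an explicit justification of the ``all exponents vanish iff $\sigma$ is desired'' claim (via $t_j\ge s_j$ plus the sum constraint), which the paper asserts without proof.
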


\begin{proof}
We begin by expanding the product:
\begin{align}
\operatorname*{E}_{x\sim\mathcal{D}}\left[  {\operatorname*{GenGly}%
\nolimits_{x}(A)}\right]   &  =\operatorname*{E}_{x\sim\mathcal{D}}\left[
{\frac{s_{1}!\dotsm s_{k}!}{s_{1}^{s_{1}}\dotsm s_{k}^{s_{k}}}\overline{y_{1}%
}^{s_{1}}\dotsm\overline{y_{k}}^{s_{k}}\prod_{i=1}^{n}\left(  y_{1}%
b_{i,1}+\dotsb+y_{k}b_{i,k}\right)  }\right] \\
&  =\frac{s_{1}!\dotsm s_{k}!}{s_{1}^{s_{1}}\dotsm s_{k}^{s_{k}}}\sum
_{\sigma_{1},\dotsc,\sigma_{n}\in\lbrack k]}b_{1,\sigma_{1}}\dotsm
b_{n,\sigma_{n}}\operatorname*{E}_{x\sim\mathcal{D}}\left[  {\left(
\overline{y_{1}}^{s_{1}}\dotsm\overline{y_{k}}^{s_{k}}\right)  \left(
y_{\sigma_{1}}\dotsm y_{\sigma_{n}}\right)  }\right] \\
&  =s_{1}!\dotsm s_{k}!\sum_{\sigma_{1},\dotsc,\sigma_{n}\in\lbrack
k]}b_{1,\sigma_{1}}\dotsm b_{n,\sigma_{n}}\sqrt{\frac{s_{\sigma_{1}}\dotsm
s_{\sigma_{n}}}{s_{1}^{s_{1}}\dotsm s_{k}^{s_{k}}}}\operatorname*{E}%
_{x\sim\mathcal{D}}\left[  {\left(  \overline{x_{1}}^{s_{1}}\dotsm
\overline{x_{k}}^{s_{k}}\right)  \left(  x_{\sigma_{1}}\dotsm x_{\sigma_{n}%
}\right)  }\right]  .
\end{align}
Say that a sequence $\sigma=(\sigma_{1},\dotsc,\sigma_{n})$ is a
\textit{desired sequence} if for all $i\in\{1,\dotsc,k\}$, we have $\left\vert
\{j:\sigma_{j}=i\}\right\vert =s_{i}$. \ Let
\begin{equation}
e_{i}=\left(  -s_{i}+\left\vert \{j:\sigma_{j}=i\}\right\vert \right)
\operatorname{mod}\left(  {s_{k}+1}\right)  .
\end{equation}
Then we have
\begin{equation}
\left(  \overline{x_{1}}^{s_{1}}\dotsm\overline{x_{k}}^{s_{k}}\right)  \left(
x_{\sigma_{1}}\dotsm x_{\sigma_{n}}\right)  =x_{1}^{e_{1}}\dotsm x_{n}^{e_{n}%
}. \label{eqn-simplify-product-to-e}%
\end{equation}
Note that $e_{i}=0$ for all $i$ if and only if $\sigma$ is desired. \ Thus, in
the case that $\sigma$ is desired, the expected value of
(\ref{eqn-simplify-product-to-e}) is $1$. \ Otherwise, since $\mathcal{D}$ is
complex-$\varepsilon$-biased, we get
\[
\left\vert \operatorname*{E}_{x\sim\mathcal{D}}\left[  {\left(  \overline
{x_{1}}^{s_{1}}\dotsm\overline{x_{k}}^{s_{k}}\right)  \left(  x_{\sigma_{1}%
}\dotsm x_{\sigma_{n}}\right)  }\right]  \right\vert \leq\varepsilon.
\]
Therefore,
\begin{align}
&  \left\vert \operatorname*{Per}(A)-\operatorname*{E}_{x\sim\mathcal{D}%
}\left[  {\operatorname*{GenGly}\nolimits_{x}(A)}\right]  \right\vert \\
&  =s_{1}!\dotsm s_{k}!\left\vert \sum_{%
\genfrac{}{}{0pt}{}{\sigma_{1},\dotsc,\sigma_{n}\in\lbrack k]}{\sigma\text{
not desired}}%
}\left(  b_{1,\sigma_{1}}\dotsm b_{n,\sigma_{n}}\right)  \sqrt{\frac
{s_{\sigma_{1}}\dotsm s_{\sigma_{n}}}{s_{1}^{s_{k}}\dotsm s_{k}^{s_{k}}}%
}\operatorname*{E}_{x\sim\mathcal{D}}\left[  {\left(  \overline{x_{1}}^{s_{1}%
}\dotsm\overline{x_{k}}^{s_{k}}\right)  \left(  x_{\sigma_{1}}\dotsm
x_{\sigma_{n}}\right)  }\right]  \right\vert \\
&  \leq\frac{s_{1}!\dotsm s_{k}!}{\sqrt{s_{1}^{s_{1}}\dotsm s_{k}^{s_{k}}}%
}\sum_{%
\genfrac{}{}{0pt}{}{\sigma_{1},\dotsc,\sigma_{n}\in\lbrack k]}{\sigma\text{
not desired}}%
}\left(  b_{1,\sigma_{1}}\dotsm b_{n,\sigma_{n}}\right)  \sqrt{s_{\sigma_{1}%
}\dotsm s_{\sigma_{n}}}\left\vert \operatorname*{E}_{x\sim\mathcal{D}}\left[
{\left(  \overline{x_{1}}^{s_{1}}\dotsm\overline{x_{k}}^{s_{k}}\right)
\left(  x_{\sigma_{1}}\dotsm x_{\sigma_{n}}\right)  }\right]  \right\vert \\
&  \leq\varepsilon\cdot\frac{s_{1}!\dotsm s_{k}!}{\sqrt{s_{1}^{s_{1}}\dotsm
s_{k}^{s_{k}}}}\sum_{%
\genfrac{}{}{0pt}{}{\sigma_{1},\dotsc,\sigma_{n}\in\lbrack k]}{\sigma\text{
not desired}}%
}\left(  b_{1,\sigma_{1}}\dotsm b_{n,\sigma_{n}}\right)  \sqrt{s_{\sigma_{1}%
}\dotsm s_{\sigma_{n}}}\\
&  \leq\varepsilon\cdot\frac{s_{1}!\dotsm s_{k}!}{\sqrt{s_{1}^{s_{1}}\dotsm
s_{k}^{s_{k}}}}\sum_{\sigma_{1},\dotsc,\sigma_{n}\in\lbrack k]}\left(
b_{1,\sigma_{1}}\dotsm b_{n,\sigma_{n}}\right)  \sqrt{s_{\sigma_{1}}\dotsm
s_{\sigma_{n}}}\\
&  =\varepsilon\cdot\frac{s_{1}!\dotsm s_{k}!}{\sqrt{s_{1}^{s_{1}}\dotsm
s_{k}^{s_{k}}}}\prod_{i=1}^{n}\left(  \sqrt{s_{1}}b_{i,1}+\dotsb+\sqrt{s_{k}%
}b_{i,k}\right) \\
&  =\varepsilon\cdot\operatorname*{GenGly}\nolimits_{(1,\dotsc,1)}(A)\\
&  \leq\varepsilon\cdot\frac{s_{1}!\dotsm s_{k}!}{\sqrt{s_{1}^{s_{1}}\dotsm
s_{k}^{s_{k}}}}\left\Vert B\right\Vert ^{n}%
\end{align}
where the last inequality follows from Lemma \ref{gengly-bound-lemma}.
\end{proof}

Theorems \ref{complex-eps-bias-thm} and
\ref{complex-eps-bias-implies-approx-thm} immediately imply the following.

\begin{corollary}
For any $n\times k$ matrix $B$ with nonnegative real entries, let $A$ be an
$n\times n$ matrix with $s_{i}$ copies of the $i^{th}$\ column of $B$. \ Then
we can deterministically approximate $\operatorname*{Per}(A)$ to within
additive error
\begin{equation}
\varepsilon\cdot\frac{s_{1}!\dotsm s_{k}!}{\sqrt{s_{1}^{s_{1}}\dotsm
s_{k}^{s_{k}}}}\left\Vert B\right\Vert ^{n}%
\end{equation}
in $\operatorname*{poly}\left(  n,1/\varepsilon\right)  $ time.
\end{corollary}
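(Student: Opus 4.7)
The plan is essentially to mimic the derandomization in Corollary \ref{derandomize-gurvits-corollary}, but now invoking the complex-$\varepsilon$-biased machinery of Section \ref{DEMULT} in place of the real-valued $\varepsilon$-biased machinery. First, I would apply Theorem \ref{complex-eps-bias-thm} to obtain a complex-$\varepsilon$-biased generator $G$ that uses only $O(\log n + \log 1/\varepsilon)$ bits of seed and runs in $\operatorname*{poly}(n,1/\varepsilon)$ time; call the induced distribution on $\mathcal{X}$ by the name $\mathcal{D}$.

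Next, because the seed length is logarithmic, the total number of seeds is $\operatorname*{poly}(n,1/\varepsilon)$, so I would iterate over every seed $\sigma$, run $G(\sigma)$ to obtain the corresponding $x = (x_1,\dotsc,x_k) \in \mathcal{X}$, and then evaluate $\operatorname*{GenGly}_x(A)$ directly from its definition. Each such evaluation requires computing the $n$ scalars $(By)_i = y_1 b_{i,1} + \dotsb + y_k b_{i,k}$ (where $y_i = \sqrt{s_i}\,x_i$) and multiplying them together with the prefactor $\frac{s_1!\dotsm s_k!}{s_1^{s_1}\dotsm s_k^{s_k}}\,\overline{y_1}^{s_1}\dotsm\overline{y_k}^{s_k}$, which is clearly doable in $\operatorname*{poly}(n,k)$ time. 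Averaging the values of $\operatorname*{GenGly}_x(A)$ over all seeds then yields exactly $\operatorname*{E}_{x\sim\mathcal{D}}[\operatorname*{GenGly}_x(A)]$, deterministically and in polynomial time.

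Finally, Theorem \ref{complex-eps-bias-implies-approx-thm} tells us that this empirical expectation differs from $\operatorname*{Per}(A)$ by at most $\varepsilon\cdot\frac{s_1!\dotsm s_k!}{\sqrt{s_1^{s_1}\dotsm s_k^{s_k}}}\|B\|^n$, which is precisely the additive error claimed. Since every step is already packaged for us by the two theorems cited, there is no genuine obstacle here; the only thing to double-check is that the roles of $\varepsilon$ in the generator and in the approximation guarantee match (they do, by direct substitution), and that the enumeration over seeds plus per-seed arithmetic indeed fits in $\operatorname*{poly}(n,1/\varepsilon)$ time, which follows from the seed-length bound in Theorem \ref{complex-eps-bias-thm}.
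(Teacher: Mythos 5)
Your proof is correct and follows exactly the route the paper intends: the paper simply states that Theorems \ref{complex-eps-bias-thm} and \ref{complex-eps-bias-implies-approx-thm} ``immediately imply'' the corollary, leaving the enumeration-over-seeds argument implicit (it mirrors the proof of Corollary \ref{derandomize-gurvits-corollary}). You have just spelled out that implication in full, so there is nothing to add.
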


\subsection{Constructing Complex-$\varepsilon$-Biased
Generators\label{CEPSBIAS}}

Our proof of Theorem \ref{complex-eps-bias-thm} will be modeled after the
proof of Theorem \ref{eps-bias-thm} given by Naor and Naor \cite{naor-naor}.
\ We just need to generalize each step to deal with the domain of complex
roots of unity rather than the binary domain $\left\{  1,-1\right\}  $.

Recall that our goal is to generate, with as few bits of randomness as
possible, complex numbers $(\zeta_{1},\dotsc,\zeta_{k})\in\mathcal{X}$, such
that for any exponents $e_{1},\dotsc,e_{k}$, we have
\begin{equation}
\left\vert \operatorname*{E}\left[  {\zeta_{1}^{e_{1}}\dotsm\zeta_{k}^{e_{k}}%
}\right]  \right\vert \leq\varepsilon.
\end{equation}
In our construction, it will be simpler to adopt an alternate perspective,
which we now describe. \ If we let $f_{j}$ be such that $\zeta_{j}=e^{2\pi
if_{j}/(s_{j}+1)}$, and let $\xi_{j}=e^{2\pi ie_{j}/(s_{j}+1)}$, then
\begin{equation}
\zeta_{j}^{e_{j}}=\left(  e^{2\pi i/(s_{j}+1)}\right)  ^{e_{j}f_{j}}=\xi
_{j}^{f_{j}}.
\end{equation}
Hence our task is equivalent to outputting a random $k$-tuple $(f_{1}%
,\dotsc,f_{k})$ such that for any choice of complex numbers $\xi_{j}=e^{2\pi
ie_{j}/(s_{j}+1)}$, not all $1$, we have
\begin{equation}
\left\vert \operatorname*{E}\left[  {\xi_{1}^{f_{1}}\dotsm\xi_{k}^{f_{k}}%
}\right]  \right\vert \leq\varepsilon.
\end{equation}
The proof of Theorem \ref{eps-bias-thm} given by Naor and Naor
\cite{naor-naor} revolves around the ability to find vectors $r\in\{0,1\}^{n}$
such that $r\cdot x=1\operatorname*{mod}2$ with constant probability for any
nonzero $x\in\{0,1\}^{n}$. \ A uniformly random $r\in\{0,1\}^{n}$ will satisfy
$r\cdot x=1\operatorname*{mod}2$ with probability $1/2$, and in fact it is
also possible to generate such an $r$ using only a small amount of randomness
(possibly allowing for a weaker constant).

In our setting, this would translate to generating $(f_{1},\dotsc,f_{k})$ such
that $\xi_{1}^{f_{1}}\dotsm\xi_{k}^{f_{k}}=-1$ with constant probability.
\ Unfortunately, this will not be possible since the $\xi_{i}$ do not come
from the binary domain $\left\{  1,-1\right\}  $. \ Instead, we will have to
replace the concept of \textquotedblleft being equal to $-1$\textquotedblright%
\ with what we call \textquotedblleft$\theta$-strongness.\textquotedblright

\begin{definition}
For a unit-norm complex number $\lambda$, we say that $\lambda$ is $\theta
$\textbf{-strong} if $\left\vert \arg\lambda\right\vert \geq\theta$, where
$\arg\lambda$ is the phase $\phi\in\lbrack-\pi,\pi)$ such that $\lambda
=e^{i\phi}$.
\end{definition}

\begin{figure}[ptb]
\begin{center}
\includegraphics[scale=0.8]{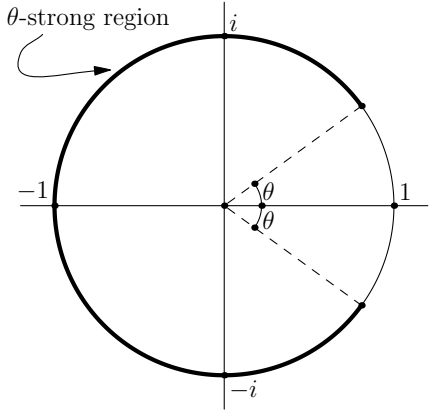}
\end{center}
\caption{The $\theta$-strong region of the unit circle in the complex plane.}%
\end{figure}

\begin{lemma}
\label{construct-strong-lemma}We can generate $f_{1},\dotsc,f_{k}$ such that
for any $\xi_{1},\dotsc,\xi_{k}$, not all $1$, we have
\begin{equation}
\Pr[\xi_{1}^{f_{1}}\dotsm\xi_{k}^{f_{k}}\text{ is }\pi/8\text{-strong}%
]\geq\frac{1}{16}%
\end{equation}
using $O(\log n)$ bits of randomness.
\end{lemma}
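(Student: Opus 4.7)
My plan is to adapt to the complex setting the $O(\log n)$-bit ``constant-bias'' construction of Naor--Naor \cite{naor-naor} that underlies Theorem~\ref{eps-bias-thm}. In the binary case, Naor--Naor produce $y\in\{0,1\}^n$ from $O(\log n)$ random bits so that, for every nonzero $a$, $(-1)^{a\cdot y}$ equals $-1$ with constant probability; equivalently, the resulting $\pm 1$-valued random variable is $\pi$-strong with constant probability. Our task is the analog on $\mathcal{X}=\mathcal{R}[s_1+1]\times\cdots\times\mathcal{R}[s_k+1]$: produce $(f_1,\ldots,f_k)$ from $O(\log n)$ random bits so that $\prod_j\xi_j^{f_j}$ is merely $\pi/8$-strong with probability $\ge 1/16$.

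Concretely, I would pick a prime $q$ with $n<q\leq 2n$ and sample $\alpha,\beta\in\mathbb{F}_q$ uniformly at random, consuming $O(\log n)$ bits total. From $(\alpha,\beta)$ I would derive $f_j\in\{0,\ldots,s_j\}$ via an affine-hash rule of the form $f_j\equiv g_j(\alpha,\beta)\pmod{s_j+1}$, mirroring the $\mathbb{F}_2$-linear structure Naor--Naor exploit in the binary case, and chosen so that the reduced samples enjoy a pairwise-independence-like symmetry that kills individual character contributions. Writing $Z=\prod_j\xi_j^{f_j}=e^{2\pi i\Phi}$ with $\Phi=\sum_j e_j f_j/(s_j+1)\bmod 1$, the task reduces to showing that, for every $(e_1,\ldots,e_k)$ not all zero,
\begin{equation}
\Pr_{\alpha,\beta}\bigl[\,\Phi\in[1/16,\,15/16]\,\bigr]\;\geq\;\tfrac{1}{16}.
\end{equation}

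The analysis proceeds by bounding the nontrivial Fourier coefficients $\mathbb{E}[e^{2\pi i m\Phi}]$ for small integer $m$ via Weil-type character-sum estimates on $\mathbb{F}_q$, then deducing anti-concentration of $\Phi$ away from the bad arc around $0$: once all low-order Fourier coefficients of $\Phi$ are bounded away from $1$, $\Phi$ cannot place more than a $15/16$ fraction of its mass in any arc of length $1/8$. The main obstacle is twofold. First, unlike the binary case where ``not $+1$'' means exactly $-1$, the complex ``bad'' region $(-\pi/8,\pi/8)$ is a full arc, and an adversary can tilt $\Phi$ toward $0$ from either side using the continuum of choices of $(\xi_1,\ldots,\xi_k)$, so we must rule out a whole family of bad concentrations rather than one special configuration. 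Second, the moduli $s_j+1$ are in general unrelated to $q$, so reducing $g_j(\alpha,\beta)\bmod(s_j+1)$ introduces non-uniformity that the character-sum analysis must absorb. The deliberately loose constants $\pi/8$ and $1/16$ in the statement are precisely what provide the slack to accommodate both of these losses, relative to the tight $\pi$-strongness and $\Omega(1)$ bias that suffice in the binary regime.
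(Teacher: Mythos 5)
Your proposal follows a genuinely different route from the paper's, and that route has a real gap. The paper adapts the \emph{combinatorial} Naor--Naor construction: it samples a $c$-wise independent vector $u$ with $u_i$ near-uniform in $\{0,\dotsc,s_i\}$, independently samples a pairwise-independent ``sparsifier'' $w\in\{0,1\}^k$ with $\Pr[w_i=1]\approx 2^{-(h-1)}$ where $2^h\le\ell<2^{h+1}$ and $\ell=|\{i:\xi_i\neq1\}|$, and sets $f_i=u_iw_i$. Proposition \ref{pairwise-small-intersection-prop} then guarantees that with probability $\ge 1/2$ the survivor set $V\cap W$ has between $1$ and $c$ elements, at which point the $c$-wise independence makes the resulting sub-product nearly uniform over a nontrivial group of roots of unity, so it is $\pi/4$-strong with probability $\ge 1/4$. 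Since $h$ is unknown, the paper combines the candidate $\lambda^{(h)}$ across all $O(\log k)$ scales by a random subset-product, losing another factor of $2$ to arrive at $1/16$.

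Your plan replaces this sparsification mechanism with an affine hash $(\alpha,\beta)\mapsto(f_1,\dotsc,f_k)$ over $\mathbb{F}_q$, reduced $\bmod\,(s_j+1)$ coordinatewise, plus a Weil-type bound on the resulting character sum. Two things break. First, ``pairwise-independence-like symmetry'' is not enough, even in the binary case: with a pairwise-independent $\{0,1\}^n$-valued generator, the parity $\sum_j e_j f_j\bmod 2$ can be a \emph{deterministic} function of the seed for adversarial $e$ with many ones, so there is no constant lower bound on $\Pr[\lambda\ne 1]$. This is precisely the obstruction the Naor--Naor sparsification step exists to overcome, and your construction has no analogue of it. Second, the Weil machinery you invoke bounds sums of the form $\sum_{\alpha}\chi(P(\alpha))$ for a multiplicative or additive character $\chi$ of $\mathbb{F}_q$ applied to a polynomial. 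Your summand is $\prod_j\xi_j^{g_j(\alpha,\beta)\bmod(s_j+1)}$, where the characters live on the unrelated (and in general pairwise non-coprime) groups $\mathbb{Z}/(s_j+1)$ and the reduction $\bmod\,(s_j+1)$ is not a ring homomorphism from $\mathbb{F}_q$; no Weil-type estimate applies to this expression. (The closest binary analogue, the AGHP $\varepsilon$-biased construction with $y_j=\langle x^j,z\rangle$ over $\mathbb{F}_{2^s}$, leans crucially on $\mathbb{F}_2$-linearity and a polynomial root-counting bound; with heterogeneous moduli $s_j+1$ that algebraic structure is simply absent.) The Fourier/anti-concentration step you describe afterward is actually fine in the form that matters here---it suffices to bound the single $m=1$ moment $|\operatorname{E}[\lambda]|$ below a constant near $1$---but without a working mechanism to bound that moment, the proof does not close. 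The paper sidesteps the whole problem by ensuring the relevant product ranges over only $O(1)$ coordinates, where limited independence alone makes the corresponding factor uniform over some nontrivial subgroup of roots of unity.
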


The proof of this will require the following fact.

\begin{proposition}
\label{pairwise-independence-prop}We can generate $k$ random variables
$(g_{1},\dotsc,g_{k})$ such that we have $g_{i}\in\{0,\dotsc,s_{i}\}$ (nearly)
uniformly, and $c$-wise independence between the $g_{i}$, using $O(c\log n)$
bits of randomness.
\end{proposition}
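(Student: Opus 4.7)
The plan is to use the classical polynomial-evaluation construction of $c$-wise independent random variables over a finite field, followed by a reduction modulo $s_i+1$ on each coordinate. Specifically, I would first choose a prime $q$ with $q \geq \max(k, n^2/\varepsilon')$ for some auxiliary parameter $\varepsilon'$ that is at most inverse-polynomial in $n$; note $q$ can be taken to be $\operatorname{poly}(n)$, so $\log q = O(\log n)$. Then I would pick $c$ coefficients $a_0,\dotsc,a_{c-1} \in \mathbb{F}_q$ independently and uniformly, and set $h_i := \sum_{j=0}^{c-1} a_j\, i^j \pmod q$ for each $i \in [k]$, interpreting $i$ as a nonzero element of $\mathbb{F}_q$ (possible since $q \geq k$). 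This step uses exactly $c \log q = O(c \log n)$ bits of randomness.

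By a standard argument using the invertibility of the Vandermonde matrix on any $c$ distinct points of $\mathbb{F}_q$, the variables $(h_1,\dotsc,h_k)$ are $c$-wise independent and each marginally uniform on $\mathbb{F}_q$. Next I would set
\begin{equation}
g_i := h_i \bmod (s_i + 1),
\end{equation}
which lies in $\{0,\dotsc,s_i\}$ as required. Because $c$-wise independence is preserved under any coordinate-wise (deterministic) function, the tuple $(g_1,\dotsc,g_k)$ is also $c$-wise independent. Each $g_i$ is within total-variation distance at most $(s_i+1)/q \leq n/q$ of the uniform distribution on $\{0,\dotsc,s_i\}$, which can be made an arbitrarily small polynomial in $1/n$ by choosing $q$ sufficiently large (still polynomial in $n$); this is the sense in which $g_i$ is \emph{nearly} uniform.

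The main obstacle, such as it is, is purely bookkeeping: one must verify that the \textquotedblleft nearly uniform\textquotedblright\ slack inherited from the mod-$(s_i+1)$ reduction is tolerable downstream in Lemma \ref{construct-strong-lemma} and Theorem \ref{complex-eps-bias-thm}. Since the deviation from uniformity can be driven below any fixed inverse polynomial in $n$ at the cost of only a constant factor increase in $\log q$, and since the $\pi/8$-strongness argument in Lemma \ref{construct-strong-lemma} only requires constant-probability events, the loss is easily absorbed. There is no substantive obstruction beyond selecting $q$ to swamp this error; the rest is the standard polynomial-hashing construction repackaged to handle heterogeneous ranges $\{0,\dotsc,s_i\}$.
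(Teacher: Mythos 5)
Your proposal is correct and matches the paper's approach: the paper cites Luby and Wigderson for $c$-wise independent, nearly uniform variables over $\mathbb{F}_p$ with $p = \operatorname{poly}(k)$, and then notes that one reduces each $t_i$ modulo $s_i+1$, accepting the small nonuniformity error. You have simply spelled out the underlying polynomial-evaluation construction and the total-variation bound $(s_i+1)/q$ that the paper leaves implicit.
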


This is well-known. \ See, e.g., Luby and Wigderson \cite{luby-wigderson}.
\ It is possible to generate $k$ random variables $t_{1},\dotsc,t_{k}$ in
$\mathbb{F}_{p}$, where $p$ is a prime of size $\operatorname*{poly}(k)$, such
that each one is uniformly distributed in $\mathbb{F}_{p}$, and any $c$ of
them are independent, using only $c\log k$ random bits. \ Since our desired
domains are $\{0,\dotsc,s_{i}\}$, to get the above result, we have to take the
$t_{i}$ mod $s_{i} + 1$. \ There is a small error since $p$ is not divisible
by $s_{i} + 1$, but this error is small enough that it does not matter for our purposes.

Another fact we will need was proved by Naor and Naor \cite{naor-naor}:

\begin{proposition}
\label{pairwise-small-intersection-prop}There is a constant $c$ such that the
following holds: Let $\ell$ and $m$ be integers such that $m\leq\ell\leq2m$,
and let $S$ is a fixed subset of $\left[  k\right]  $ with $\left\vert
S\right\vert =\ell$. \ If $T$ is a randomly chosen subset of $\left[
k\right]  $ such that each element $i\in\left[  k\right]  $ is in $T$ with
probability $2/m$, and the elements are chosen with pairwise independence,
then
\begin{equation}
\Pr\left[  \left\vert S\cup T\right\vert \leq c\right]  \leq\frac{1}{2}.
\end{equation}

\end{proposition}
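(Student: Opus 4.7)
The plan is to apply Chebyshev's inequality to the pairwise-independent indicator sum $X := |T\setminus S| = \sum_{i\in [k]\setminus S}\mathbb{1}[i\in T]$. Starting from the identity $|S\cup T| = \ell + |T\setminus S|$, the event $|S\cup T| \leq c$ is equivalent to $X \leq c-\ell$; this is vacuous once $\ell > c$ (in particular whenever $m>c$, since $\ell \geq m$), so the only nontrivial regime is $\ell\leq c$, which also forces $m\leq c$.

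In that regime I would compute the first two moments of $X$. The mean is $\mu := \operatorname*{E}[X] = (k-\ell)\cdot (2/m)$. Because the $\mathbb{1}[i\in T]$ are pairwise independent, the variance of the sum equals the sum of the variances, giving $\operatorname*{Var}[X] = (k-\ell)(2/m)(1-2/m) \leq \mu$. Chebyshev's inequality then yields
\begin{equation*}
\Pr[X \leq c-\ell] \;\leq\; \Pr\bigl[|X-\mu|\geq \mu - (c-\ell)\bigr] \;\leq\; \frac{\operatorname*{Var}[X]}{(\mu-c+\ell)^{2}} \;\leq\; \frac{\mu}{(\mu-c+\ell)^{2}},
\end{equation*}
provided $c-\ell < \mu$. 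Since $c$ is an absolute constant while $\mu$ grows linearly in $k$, for $k$ sufficiently large compared to $c$ the right-hand side drops below $1/2$.

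The constant $c$ is then chosen large enough that both regimes are handled uniformly across all admissible $(k,m,\ell)$: in the trivial regime $\ell>c$ the probability is $0$; in the regime $\ell\leq c$ (so also $m\leq c$), the Chebyshev bound $\mu/(\mu-c+\ell)^{2}\leq 1/2$ holds whenever $\mu$ exceeds some threshold depending only on $c$, and $\mu\geq 2(k-c)/c$ takes care of this once $k$ is a sufficiently large multiple of $c$. Any residual small-$k$ regime is absorbed into the constant $c$ itself.

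The main obstacle is the bookkeeping required to pin down the absolute constant $c$ so that the Chebyshev tail cleanly clears $1/2$ uniformly in the ranges $m\leq\ell\leq 2m$ and for all $k$ in the regime of interest; nothing stronger than pairwise independence is used, and it enters only through the elementary identity $\operatorname*{Var}\bigl[\sum X_i\bigr] = \sum \operatorname*{Var}[X_i]$ for pairwise-independent summands. No Chernoff-type concentration is needed, which is important because the hypothesis supplies only pairwise (not full) independence.
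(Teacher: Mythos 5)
The paper does not prove this proposition; it cites Naor and Naor \cite{naor-naor}, so there is no proof in the paper to compare against. More to the point, the statement as printed is almost certainly a typo: the way it is invoked in Lemma \ref{construct-strong-lemma} (``with probability at least $1/2$, we will have $1\leq|V\cap W|\leq c$'') shows that the relevant quantity is the \emph{intersection} $|S\cap T|$, not the union $|S\cup T|$, with the bad event being $|S\cap T|=0$ or $|S\cap T|>c$. The literal claim about $|S\cup T|\leq c$ is trivially true for $c=1$ (since $|S\cup T|\geq\ell\geq m\geq 2$, the last because $2/m$ must be a valid probability), and is \emph{false} for every $c\geq 2$: take $k=\ell\leq c$ with any admissible $m$; then $S\cup T=[k]$ with certainty and $\Pr[|S\cup T|\leq c]=1$. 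In particular the paper's claimed value $c=7$ does not satisfy the statement as printed, confirming the typo.

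Your proof attacks the literal statement and has a genuine gap even there. You correctly note that the event is vacuous once $\ell>c$, and that for $\ell\leq c$ one can try Chebyshev on $X=|T\setminus S|$. But the closing claim that ``any residual small-$k$ regime is absorbed into the constant $c$ itself'' is exactly backwards: enlarging $c$ loosens the threshold $|S\cup T|\leq c$ and so can only \emph{increase} the failure probability, while simultaneously enlarging the set of $\ell$ that fall into the nontrivial regime. Concretely, for any $c\geq 2$ take $k=\ell\leq c$: then $X=0$ deterministically and the event holds with probability $1$, no matter how large you take $c$. The only way to rescue the literal statement is to take $c=1$, which empties the regime $\ell\leq c$ (it would force $m\leq 1$, incompatible with $2/m\leq 1$) — a degenerate reading your argument never arrives at. Finally, even after correcting the typo to $|S\cap T|$, your Chebyshev computation controls the lower tail of $|T\setminus S|$, whereas the intended proposition needs an upper-tail bound on $|S\cap T|$ (Markov suffices, since $\operatorname*{E}[|S\cap T|]=2\ell/m\leq 4$) together with a lower-tail bound showing $|S\cap T|\geq 1$ with constant probability — a different pair of calculations on a different random variable.
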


In particular, they show that we can take $c=7$. \ These facts are enough for
the following proof:

\begin{proof}
[Proof of Lemma \ref{construct-strong-lemma}]Suppose that $\ell\geq1$ of the
$\xi_{i}$ are not equal to $1$, and let $h$ be the integer such that
$2^{h}\leq\ell<2^{h+1}$. \ For now, suppose that we know $h$; we will relax
this assumption later. \ Do the following (using Proposition
\ref{pairwise-independence-prop}):
\begin{itemize}
\item Generate $u = (u_{1},\dotsc,u_{k})$ whose entries are $c$-wise
independent and such that $u_{i}$ is (nearly) uniformly distributed in
$\{0,\dotsc,s_{i}\}$, in the manner stated above. \ Here, $c$ is the constant
from Proposition \ref{pairwise-small-intersection-prop}.
\item Generate $w=(w_{1},\dotsc,w_{k})$ whose entries are pairwise independent
and such that $w_{i}\in\left\{  0,1\right\}  $ and $\Pr[w_{i}=1]=\min
(1/2^{h-1},1)$.
\end{itemize}
Finally, we define $f=(f_{1},\dotsc,f_{k})$ by
\begin{equation}
f_{i}=\left\{
\begin{tabular}
[c]{ll}%
$0$ & if $w_{i}=0$\\
$u_{i}$ & if $w_{i}=1$%
\end{tabular}
\ \right.
\end{equation}
Let $V:=\{i:\xi_{i}\neq1\}$ and $W:=\{i:w_{i}=1\}$. \ Now we need that with
probability at least $1/2$, we will have $1\leq\left\vert V\cap W\right\vert
\leq c$. \ This follows from Proposition
\ref{pairwise-small-intersection-prop}. Now let us examine $\lambda:=\xi
_{1}^{f_{1}}\dotsm\xi_{k}^{f_{k}}$. \ With probability $1/2$, we have
$1\leq\left\vert V\cap W\right\vert \leq c$. \ If this is the case, write
$\lambda=\xi_{i_{1}}^{f_{i_{1}}}\dotsm\xi_{i_{d}}^{f_{i_{d}}}$ where $V\cap
W=\left\{  i_{1},\ldots,i_{d}\right\}  $, and $1\leq d\leq c$. \ Then, the
values of $f_{i_{j}}$ will be (nearly) uniform in $\{0,\dotsc,s_{i_{j}}\}$ and
independent. \ Because of the $c$-independence, for any fixed values of
$f_{i_{2}},\dotsc,f_{i_{d}}$, we can condition on those choices and still get
that $f_{i_{1}}$ is uniform in $\{0,\dotsc,s_{i_{j}}\}$. \ Now, we know that
$\xi_{i_{1}}$ is an $s_{i_{1}}^{\text{th}}$ root of unity other than $1$ (but
still not necessarily primitive). \ Therefore as $\xi_{i_{1}}^{f_{i_{1}}}$
ranges around the unit circle, we can see that
\begin{equation}
\xi_{i_{1}}^{f_{i_{1}}}\left(  \xi_{i_{1}}^{f_{i_{1}}}\dotsm\xi_{i_{d}%
}^{f_{i_{d}}}\right)  =\xi_{i_{1}}^{f_{i_{1}}}\left(  \text{fixed unit-norm
complex number}\right)
\end{equation}
will have a probability of at least $1/4$ of being $\pi/4$-strong (being
conservative with constants here). \ Putting it all together, we see that
$\lambda$ has a $1/8$ probability of being $\pi/4$-strong, and we have
generated $\lambda$ using only $O(\log n)$ bits. However, this has all assumed
that we know the value of $h$. \ Suppose we do not know $h$. \ Then, generate
$O(\log n)$ random bits, and for every $h\in\{0,\dotsc,\left\lfloor \log
_{2}k\right\rfloor \}$ use these bits to construct $\lambda^{(h)}$ using the
above method. \ Let $h_{\text{actual}}$ be the value such that
$2^{h_{\text{actual}}}\leq\ell<2^{h_{\text{actual}}\ +1}$. Now, randomly and
indepently choose bits $b_{0},\dotsc,b_{\log_{2}k}$ and let $\lambda
:=\prod_{h}\left(  \lambda^{(h)}\right)  ^{b_{h}}$. \ There is a $1/8$ chance
that $\lambda^{(h_{\text{actual}})}$ is $\pi/4$-strong. \ If this happens to
be the case, then write
\begin{equation}
\lambda=\left(  \lambda^{(h_{\text{actual}})}\right)  ^{b_{h_{\text{actual}}}%
}\prod_{h\neq h_{\text{actual}}}\left(  \lambda^{(h)}\right)  ^{b_{h}}%
\end{equation}
If we fix the values of $b_{h}$ for $h\neq h_{\text{actual}}$, then the factor
on the right becomes fixed, call it $\Lambda$. \ If $\Lambda$ is $\pi
/8$-strong, then $\lambda$ will be $\pi/8$-strong if $b_{h_{\text{actual}}}%
=0$. \ If $\Lambda$ is not $\pi/8$-strong, then $\lambda$ will be $\pi
/8$-strong if $b_{h_{\text{actual}}}=1$ (here, using the fact that
$\lambda^{(h_{\text{actual}})}$ is $\pi/4$-strong). \ In either case, there is
a $1/2$ chance that $\lambda$ is $\pi/8$-strong. \ Combining with the $1/8$
probabililty from earlier, we get that in total, there is at least a $1/16$
chance that $\lambda$ is $\pi/8$-strong.
\end{proof}

However, we need a better guarantee than just constant probability given by
Lemma \ref{construct-strong-lemma}. \ In particular, we will need to construct
a set of $\lambda$ which has high probability (at least $1-\varepsilon/2$) of
containing many $\pi/8$-strong elements. \ We could sample $O(\log
(1/\varepsilon))$ samples independently, but that would require too many
random bits.

To reduce the number of random bits, we use \textit{deterministic
amplification}. \ The following result was shown independently by Cohen and
Wigderson \cite{cohen-wigderson} and by Impagliazzo and Zuckerman
\cite{impagliazzo-zuckerman}, using ideas from Ajtai, Koml\'{o}s, and
Szemer\'{e}di \cite{aks2}. \ The original motivation for these results was to
amplify error probabilities in \textsf{BPP} algorithms using few bits of
randomness. \ Here is a formulation which is useful to us:

\begin{theorem}
\label{deterministic-amplification-thm}Let $\mathcal{T}=\left\{  0,1\right\}
^{r}$. \ Then with $r+O\left(  \ell\right)  $ bits of randomness we can
efficiently generate a set $y_{1},\dotsc,y_{\ell}\in\mathcal{T}$ such that for
any $\mathcal{S}\subseteq\mathcal{T}$ with $\left\vert \mathcal{S}\right\vert
\geq2\left\vert \mathcal{T}\right\vert /3$, we have
\begin{equation}
\Pr[y_{i}\in\mathcal{S}\text{ for at least $\ell/2$ values of }i]\geq
1-2^{-\Omega\left(  \ell\right)  }.
\end{equation}

\end{theorem}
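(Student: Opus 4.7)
The plan is to prove Theorem \ref{deterministic-amplification-thm} via a random walk on an explicit constant-degree expander graph whose vertex set is $\mathcal{T}$. Specifically, I would fix (once and for all) an explicit family of $d$-regular expanders on $2^r$ vertices with constant degree $d$ and second-largest eigenvalue (in absolute value) bounded by some $\lambda < 1$; any of the standard constructions (Margulis--Gabber--Galil, or the zig-zag product of Reingold--Vadhan--Wigderson) works, and each supports efficient neighbor queries, so that step costs no more than $\mathrm{poly}(r)$ time per step.

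Next I would describe the generator: use $r$ uniform bits to choose a starting vertex $y_1 \in \mathcal{T}$, then take a random walk of length $\ell - 1$ on the expander, using $\log_2 d = O(1)$ bits per step. The walk produces a sequence $y_1, \dots, y_\ell \in \mathcal{T}$ and consumes exactly $r + O(\ell)$ random bits, as required. Since the expander is regular, each individual $y_i$ is uniformly distributed on $\mathcal{T}$, so the marginal probability that $y_i \in \mathcal{S}$ is at least $2/3$.

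The heart of the proof is the Expander Chernoff Bound of Ajtai--Koml\'{o}s--Szemer\'{e}di \cite{aks2} (sharpened by Gillman and others): for a random walk of length $\ell$ on a regular graph with second eigenvalue $\lambda$, and for any set $\mathcal{S}$ of density $\mu$,
\begin{equation}
\Pr\!\left[\,\bigl|\{i : y_i \in \mathcal{S}\}\bigr| \le (\mu - \delta)\ell\,\right] \le 2 \exp\!\bigl(-\Omega((1-\lambda)\delta^2 \ell)\bigr).
\end{equation}
Applying this with $\mu \ge 2/3$ and $\delta = 1/6$ gives that fewer than $\ell/2$ of the $y_i$ lie in $\mathcal{S}$ with probability at most $2^{-\Omega(\ell)}$, where the implicit constant depends only on the fixed $\lambda$ of our expander family. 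This yields exactly the claimed bound.

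The main obstacle is entirely in invoking the expander Chernoff bound with the correct randomness accounting: I need to be careful that neighbor indexing in the expander really is $O(1)$ bits per step (which forces a genuinely constant-degree family rather than $\mathrm{polylog}$-degree), and I must verify that the spectral gap is bounded below by an absolute constant so that the exponent $\Omega((1-\lambda)\delta^2 \ell) = \Omega(\ell)$ does not secretly depend on $r$ or $|\mathcal{T}|$. Once an explicit constant-degree, constant-spectral-gap family is in hand, everything else is a direct citation, and the $r + O(\ell)$ randomness bound follows immediately from the construction.
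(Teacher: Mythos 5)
Your proof is correct and follows essentially the same approach as the references the paper cites (Cohen--Wigderson, Impagliazzo--Zuckerman, building on Ajtai--Koml\'{o}s--Szemer\'{e}di): a random walk of length $\ell$ on an explicit constant-degree, constant-spectral-gap expander over $\mathcal{T}$, with the concentration supplied by an expander Chernoff bound. The paper states this theorem as a known black box rather than proving it, and your argument is the standard proof behind that citation.
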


This gives the following lemma:

\begin{lemma}
\label{construct-strong-highprob-lemma}There are constants $p,q>0$ such that
using $O\left(  \log n+\ell\right)  $ bits of randomness, we can generate
$\lambda_{1},\dotsc,\lambda_{\ell}$ such that with probability $1-2^{q\ell}$,
there are at least $p\ell$ values of $i$ such that $\lambda_{i}$ is $\pi/8$-strong.
\end{lemma}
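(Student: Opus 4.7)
The plan is to combine Lemma \ref{construct-strong-lemma} with the deterministic amplification of Theorem \ref{deterministic-amplification-thm}. Lemma \ref{construct-strong-lemma} produces, using $r = O(\log n)$ random bits, a single $\lambda$ that is $\pi/8$-strong with probability at least $1/16$, whereas Theorem \ref{deterministic-amplification-thm} requires a good set of density at least $2/3$ inside the seed space. My plan is therefore to first boost the per-trial success probability to $2/3$ by a constant-size parallelization, and then feed the result into the amplification machinery.

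For the boosting step, I would pick a constant $C$ large enough that $(15/16)^{C}\le 1/3$, set $\mathcal{T}:=\{0,1\}^{Cr}$, and view a string in $\mathcal{T}$ as $C$ independent seeds feeding $C$ parallel invocations of Lemma \ref{construct-strong-lemma}, producing $C$ candidate values of $\lambda$. Call a meta-seed \emph{good} if at least one of its $C$ candidates is $\pi/8$-strong; by independence, the good set $\mathcal{S}\subseteq\mathcal{T}$ satisfies $|\mathcal{S}|/|\mathcal{T}|\ge 1-(15/16)^{C}\ge 2/3$.

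Next I would apply Theorem \ref{deterministic-amplification-thm} with $\ell':=\lceil\ell/C\rceil$ to produce meta-seeds $\tilde{y}_{1},\dotsc,\tilde{y}_{\ell'}\in\mathcal{T}$ using $Cr+O(\ell')=O(\log n+\ell)$ total random bits. With probability at least $1-2^{-\Omega(\ell')}=1-2^{-\Omega(\ell)}$, at least $\ell'/2$ of the $\tilde{y}_{i}$ lie in $\mathcal{S}$, and each such good meta-seed contributes at least one $\pi/8$-strong $\lambda$ among its $C$ outputs. Concatenating the outputs of all meta-seeds yields $C\ell'\ge\ell$ values of $\lambda$ (truncated to exactly $\ell$ if desired), of which at least $\ell'/2\ge\ell/(2C)$ are $\pi/8$-strong. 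Taking $p:=1/(2C)$ and letting $q$ be the implicit constant in the $2^{-\Omega(\ell)}$ bound then yields the lemma (with the exponent $-q\ell$; the stated $2^{q\ell}$ in the lemma appears to be a sign typo).

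The main obstacle is precisely the gap between the $1/16$ per-trial success probability supplied by Lemma \ref{construct-strong-lemma} and the $2/3$ density demanded by Theorem \ref{deterministic-amplification-thm}; the parallelization step is what bridges it. One subtlety that I will need to respect is that the parallelization must happen \emph{inside} the seed space of a single amplification ``trial'', so that the $C$ parallel copies do not each consume $O(\log n)$ fresh bits at every one of the $\ell'$ meta-trials, which would destroy the $O(\log n+\ell)$ bit budget. Once the seed spaces are laid out correctly, the rest is bookkeeping of constants.
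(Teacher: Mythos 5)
Your proposal is correct and follows essentially the same route as the paper: a constant-size initial amplification to boost the per-group success probability to $2/3$ (the paper's $t$ is your $C$), followed by the Cohen--Wigderson/Impagliazzo--Zuckerman amplification of Theorem \ref{deterministic-amplification-thm} applied to the resulting meta-seed space, yielding a $1/(2C)$ fraction of $\pi/8$-strong outputs except with $2^{-\Omega(\ell)}$ failure probability. You are also right that the ``$1-2^{q\ell}$'' in the lemma statement should read ``$1-2^{-q\ell}$.''
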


\begin{proof}
We start with a small amount of initial amplification: let $t$ be the constant
such that if we generate $\lambda_{1},\dotsc,\lambda_{t}$ independently as in
Lemma \ref{construct-strong-lemma}, then with probability at least $2/3$, one
of the $\lambda_{i}$ is $\pi/8$-strong, and let $r$ be the number of random
bits needed to generate these $\lambda_{1},\dotsc,\lambda_{t}$. \ If we want
to generate $\lambda_{1},\dotsc,\lambda_{\ell}$ for some $\ell$, then generate
groups $\lambda_{j,1},\dotsc,\lambda_{j,t}$ for $j\in\left[  \ell/t\right]  $
according to Theorem \ref{deterministic-amplification-thm}. \ Then with
probability at least $1-2^{-\Omega\left(  \ell/t\right)  }=1-2^{-\Omega\left(
\ell\right)  }$, at least half of the groups will have at least one $\pi
/8$-strong member, i.e., at least a $1/2t$ fraction of the $\lambda$'s will be
$\pi/8$-strong. \ The total number of random bits used is $r+O\left(
\ell\right)  =O\left(  \log n+\ell\right)  $.
\end{proof}

It is worth noting that Lemma \ref{construct-strong-highprob-lemma} is a
slight difference between our construction of complex-$\varepsilon$-biased
distributions and the construction of $\varepsilon$-biased distributions by
Naor and Naor \cite{naor-naor}. \ Naor and Naor constructed many values,
needing only one to be $-1$, while we construct many values needing
\textit{many} of them to be $\pi/8$-strong, for reasons evident in the proof
of Theorem \ref{complex-eps-bias-thm}. \ Luckily, this was not a problem,
since as we saw, the same class of deterministic amplification results
applied, allowing us to prove Lemma \ref{construct-strong-highprob-lemma}.

\begin{proof}
[Proof of Theorem \ref{complex-eps-bias-thm}]Generate $\lambda_{1}%
,\dotsc,\lambda_{\ell}$ as in Lemma \ref{construct-strong-highprob-lemma},
choosing
\begin{equation}
\ell=\left\lceil \max\left(  \frac{\log_{1/2}\left(  \varepsilon/2\right)
}{q},\frac{\log_{\beta}\left(  \varepsilon/2\right)  }{p}\right)  \right\rceil
.
\end{equation}
where $\beta:=\frac{1}{2}\left\vert 1+e^{\pi i/8}\right\vert $. \ Randomly and
independently choose $d_{1},\dotsc,d_{\ell}$, each uniformly from $\{0,1\}$.
\ Then let
\begin{equation}
\mu:=\prod_{j=1}^{\ell}\lambda_{i}^{d_{i}}.
\end{equation}
With probability at least $1-2^{-q\ell}\geq1-\varepsilon/2$, at least $p\ell$
of the $\lambda_{j}$ will be $\pi/8$-strong. \ Suppose that this is the case.
\ Then
\begin{align}
\left\vert \operatorname*{E}_{d\sim\{0,1\}^{\ell}}\left[  \mu\right]
\right\vert  &  =\prod_{j=1}^{\ell}\left\vert \frac{1+\lambda_{j}}%
{2}\right\vert \\
&  \leq\beta^{p\ell}\\
&  \leq\varepsilon/2.
\end{align}
Thus, this shows that we get that with probability at least $1-\varepsilon/2$,
the norm of the expected value of $\mu$ is at most $\varepsilon/2$. \ Hence,
the total expected value of $\mu$, over the random choices of $\lambda
_{1},\dotsc,\lambda_{\ell}$ and $d_{1},\dotsc,d_{\ell}$ is at most
\begin{equation}
\left(  1-\frac{\varepsilon}{2}\right)  \frac{\varepsilon}{2}+\frac
{\varepsilon}{2}\left(  1\right)  \leq\varepsilon.
\end{equation}
Now, since we have to choose $\ell$ values $d_{j}$, each in $\left\{
0,1\right\}  $, this takes $O\left(  \ell\right)  =O\left(  \log
1/\varepsilon\right)  $ bits of randomness. \ Generating the $\lambda
_{1},\dotsc,\lambda_{\ell}$ takes $O\left(  \log n+\ell\right)  =O\left(  \log
n+\log1/\varepsilon\right)  $ bits of randomness. \ Therefore, the entire
procedure then takes $O\left(  \log n+\log1/\varepsilon\right)  $ bits of randomness.
\end{proof}

\section{Open Problems\label{OPEN}}

There are many interesting open problems about Gurvits's algorithm, and the
classical simulability of quantum optics more generally. \ Firstly, can we
improve the error bound even further in the case that the matrix $A$ has both
repeated rows \textit{and} repeated columns? \ More specifically, can we
estimate any linear-optical amplitude%
\begin{equation}
\left\langle s_{1},\dotsc,s_{k}|\varphi\left(  U\right)  |t_{1},\dotsc
,t_{k}\right\rangle =\frac{\operatorname*{Per}\left(  U_{s_{1},\dotsc
,s_{k},t_{1},\dotsc,t_{k}}\right)  }{\sqrt{s_{1}!\cdots s_{k}!t_{1}!\cdots
t_{k}!}}%
\end{equation}
to $\pm1/\operatorname*{poly}\left(  n\right)  $\ additive error (or
better)\ in polynomial time? \ This is related to a question raised by
Aaronson and Arkhipov \cite{aark}, of whether quantum optics can be used to
solve any \textit{decision} problems that are classically intractable (rather
than just sampling and search problems).

An even more obvious problem is to generalize our derandomization of Gurvits's
algorithm so that it works for \textit{arbitrary} matrices, not only matrices
with nonnegative entries. \ Unfortunately, our current technique appears
unable to handle arbitrary matrices, as it relies too heavily on the
expansions of $\operatorname*{Gly}_{(1,\dotsc,1)}\left(  A\right)  $ and
$\operatorname*{GenGly}_{(1,\dotsc,1)}\left(  A\right)  $ having no negative terms.

An alternative approach to derandomizing Gurvits's algorithm would be to prove
a general \textquotedblleft structure theorem,\textquotedblright\ showing that
if $U$ is a unitary or subunitary matrix, then $\operatorname*{Per}\left(
U\right)  $\ can only be non-negligibly large if $U$ has some special form:
for example, if $U$ is close (in a suitable sense) to a product of permutation
and diagonal matrices. \ A deterministic algorithm to estimate
$\operatorname*{Per}\left(  U\right)  $\ could then follow, by simply checking
whether the structure is present or not. \ We conjecture that such a structure
theorem holds, but do not even know how to formulate the conjecture precisely.

Even in the realm of nonnegative entries, there is much room for improvement.
\ Among other things, it would be extremely interesting to derandomize a
Jerrum-Sinclair-Vigoda-type algorithm \cite{jsv}, enabling us
deterministically to approximate the permanent of nonnegative matrices to
within small \textit{multiplicative} error.

\section{Acknowledgments}

We thank Alex Arkhipov, Michael Forbes, Leonid Gurvits, and Shravas Rao for
helpful discussions.

\bibliographystyle{plain}
\bibliography{thesis}

\end{document}